\newtheorem{lemma}{Lemma}
\newtheorem{theorem}{Theorem}
\begin{document}
%
\title{Single-Source/Sink Network Error Correction \\Is as Hard as Multiple-Unicast}

\author{\IEEEauthorblockN{Wentao Huang and Tracey Ho}
\IEEEauthorblockA{Department of Electrical Engineering\\
California Institute of Technology\\
Pasadena, CA\\
\{whuang,tho\}@caltech.edu}
\and
\IEEEauthorblockN{Michael Langberg}
\IEEEauthorblockA{Department of Electrical Engineering\\
University at Buffalo, SUNY\\
Buffalo, NY\\
mikel@buffalo.edu}
\and
\IEEEauthorblockN{Joerg Kliewer}
\IEEEauthorblockA{Department of ECE\\
New Jersey Institute of Technology\\
Newark, NJ\\
jkliewer@njit.edu}}


%


\maketitle

\begin{abstract}
We study the problem of communicating over a
single-source single-terminal network in the
presence of an adversary that may jam a single link of the
network. If any one of the edges can be jammed, the capacity of
such networks is well understood and follows directly from
the connection between the minimum cut and maximum flow in
single-source single-terminal networks. In this work we consider
networks in which some edges cannot be jammed, and show that
determining the network communication capacity is at least as hard as solving the
multiple-unicast network coding problem for the
error-free case. The latter problem is a long standing open
problem.
\end{abstract}


%
\IEEEpeerreviewmaketitle

\allowdisplaybreaks[4]
\section{Introduction}
The problem of network error correction concerns reliable transmission of information in a network with point-to-point noiseless channels, in the presence of an adversary. The adversary controls a set $A$ of channels in the network and may corrupt the information transmitted on these channels in an arbitrary way. A network error correction code, first introduced by Cai and Yeung \cite{Yeung:2006ut, Yeung:2006vl}, is a network code that can correct adversarial errors injected into the network from a set of channels $A$, for all $A \in \mathcal{A}$, where $\mathcal{A}$ is a prescribed collection of subsets of channels that characterizes the strength of the adversary. For single-source multicast, under the simplifying assumption that all channels  have unit capacity and $\mathcal{A}$ is the collection of all subsets containing $Z$ channels, \cite{Yeung:2006ut, Yeung:2006vl} show that the cut-set bound is tight and characterizes the network error correction capacity, which can be achieved by a linear code. Under similar settings a variety of works, e.g.,  \cite{Koetter:2008jt, Jaggi:2008dq, Kschischang:2008jj,Zhang:2008wf,Nutman:2008ve,Huang:2013ip} have proposed different efficient capacity-achieving codes and strategies.

 However, under slightly more general settings such that channel capacities are non-uniform or $\mathcal{A}$ has a more general structure, much less is known about the network error correction capacity and achievable strategies. Kim et al. \cite{Kim:2011ec} study a model in which channel capacities are arbitrary and show that capacity upper bounds based on cut-set approaches are generally not tight. \cite{Kim:2011ec} also constructs examples where  linear codes are  insufficient to achieve capacity. Kosut et al. \cite{Kosut:2009ts} study a model in which the adversary controls network nodes instead of channels, which is a special case of network error correction for non-uniform $\mathcal{A}$  in the sense that $\mathcal{A}$ may include subsets of different sizes. In this case  \cite{Kosut:2009ts} constructs an example that linear codes are inadequate to achieve capacity. Achievable strategies under this node adversary model are also studied in \cite{DaWang:2010wv,Kosut:2010ti,Che:2013vy}, whereas determining the capacity region remains an open problem. As opposed to the well studied and well understood setting of \cite{Yeung:2006ut, Yeung:2006vl}, the subtlety of finding and achieving  the network error correction capacity in the more general settings above motivates us to examine the fundamental complexity of the general network error correction problem.

 In this paper, we show that solving the single-unicast network error correction problem with general $\mathcal{A}$ is as hard as solving the multiple-unicast network coding problem (with no error).
Specifically, we convert any unit rate $k$-unicast network coding problem into a corresponding network error correction problem with a single source, a single sink, and a single adversarial channel chosen from a subset of channels, such that the unit rate $k$-unicast is feasible with zero error if and only if the zero-error network error correction capacity is $k$. Under the vanishing error model, we show a similar but slightly weaker result. Specifically, in this case if the unit rate $k$-unicast is feasible, then a network error correction rate of $k$ is feasible. Conversely, if a network error correction rate of $k$ is feasible, then the unit rate $k$-unicast is asymptotically feasible.


Our results add to the portfolios of problems that are connected to multiple-unicast network coding, which is a long standing open problem not presently known to be in P, NP or undecidable \cite{Lehman:2005wi, Medard:2009uj, Dougherty:2005vf,Dougherty:2007um}. Previously, equivalence results have been established, e.g., between multiple-unicast network coding and multiple-multicast network coding \cite{Dougherty:2006vl, Wong:2013ve}, index coding \cite{ElRouayheb:2010tp, Effros:2013ty}, secure network coding \cite{Chan:2008uf, Huang:2013br} and two-unicast network coding \cite{Kamath:2014wy}. 

 The remainder of the paper is structured as follows. In Section II, we present the models and definitions of multiple-unicast network coding and single-source single-sink network error correction. In Section III and IV, we prove the reduction from multiple-unicast to network error correction for the zero error model and vanishing error model, respectively. Finally, we conclude the paper in Section V.

\section{Models}
\subsection{Multiple-unicast Network Coding}
We model the network to be a directed graph $\mathcal{G}=(\mathcal{V},\mathcal{E})$, where the set of vertices $\mathcal{V}$ represents network nodes and the set of edges $\mathcal{E}$ represents network channels. Each edge $e \in \mathcal{E}$ has a capacity $c_e$, which is the maximum number of bits\footnote{For convenience we assume that the network channels transmit binary symbols. Our results can be naturally extended to the general $q$-ary case.} that can be transmitted on $e$ in one transmission. An instance  $\mathcal{I}=(\mathcal{G}, \mathcal{S}, \mathcal{T}, B)$ of the  \emph{multiple-unicast network coding problem},  includes a network $\mathcal{G}$, a set of source nodes $\mathcal{S} \subset \mathcal{V}$, a set of terminal nodes $\mathcal{T} \subset \mathcal{V}$ and an $|\mathcal{S}|$ by $|\mathcal{T}|$ requirement matrix $B$. The $(i,j)$-th entry of $B$ equals 1 if terminal $j$ requires the information from source $i$ and equals 0 otherwise. We assume that $B$ is a permutation matrix and so each source is paired with a single terminal. Let $s(t)$ be the source that is required by terminal $t$. Denote $[n] \triangleq \{1,.., \lceil n \rceil\}$, then each source $s \in \mathcal{S}$ is associated with a  message, which is a rate $R_s$ random variable $M_s$ uniformly distributed over $[2^{nR_s}]$. The messages for different sources are independent. A network code of length $n$ is defined as a set of encoding functions $\phi_e$ for every $e \in \mathcal{E}$ and a set of decoding functions $\phi_t$ for each $t \in \mathcal{T}$. For each $e =(u,v) $, the encoding function $\phi_e$ is a function taking as input the random variables associated with incoming edges of node $u$ and the random variable $M_u$ if $u \in \mathcal{S}$, and maps to values in $[2^{n c_e}]$. For each $t \in \mathcal{T}$, the decoding function $\phi_t$ maps all random variables associated with the incoming edges of $t$, to a message $\hat{M}_{s(t)}$ with values in  $[2^{nR_{s(t)}}]$.

A network code $\{\phi_e, \phi_t \}_{e \in \mathcal{E}, t \in \mathcal{T}}$ is said to \emph{satisfy} a terminal $t$ under transmission $(m_s, s\in\mathcal{S})$ if $\hat{M}_{s(t)} = m_{s(t)}$ when $(M_s, s \in \mathcal{S}) = (m_s, s \in \mathcal{S})$. A network code is said to satisfy the multiple-unicast network coding problem $\mathcal{I}$ with error probability $\epsilon$ if the probability that all $t \in \mathcal{T}$ are simultaneously  satisfied is at least $1-\epsilon$. The probability is taken over the joint distribution on random variables $(M_s, s\in\mathcal{S})$. Namely, the network code satisfies $\mathcal{I}$ with error probability $\epsilon$ if
\begin{align}
\Pr_{(M_s, s \in \mathcal{S})} \left\{ \bigcap_{t \in \mathcal{T}} t \text{ is satisfied under }(M_s, s \in \mathcal{S})  \right\} \ge 1-\epsilon
\end{align}

For an instance $\mathcal{I}$ of the multiple-unicast network coding problem, rate $R$ is said to be \emph{feasible} if $R_s = R$, $\forall s\in\mathcal{S}$, and for any $\epsilon>0$, there exists a network code with sufficiently large length that satisfies $\mathcal{I}$ with error probability at most $\epsilon$. Rate $R$ is said to be \emph{feasible with zero error} if $R_s = R$, $\forall s\in\mathcal{S}$ and there exists a network code that satisfies $\mathcal{I}$ with zero error probability. Rate $R$ is said to be \emph{asymptotically feasible} if  for any $\delta > 0$, rate $(1-\delta)R$ is feasible. The capacity of $\mathcal{I}$ refers to  the supremum over all rates $R$ that are asymptotically feasible and the zero-error capacity of $\mathcal{I}$ refers to the  supremum over all rates $R$ that are feasible with zero error. The given model assumes all sources transmit information at equal rate. There is no loss of generality in this assumption as a varying rate source $s$ can be modeled by several equal rate sources co-located at $s$.

\subsection{Single-Source Single-Sink Network Error Correction}
An instance $\mathcal{I}_c = (\mathcal{G}, s, t, \mathcal{A})$ of the \emph{single-source single-terminal network error correction problem} includes a network $\mathcal{G}$, a source node $s \in \mathcal{V}$, a terminal node $t \in \mathcal{V}$ and a collection of subsets of channels $\mathcal{A} \subset 2^\mathcal{E}$ susceptible to errors. In this problem the channels are not always reliable and an error is said to occur in a channel if the output of the channel is different from the input. More precisely, the output of a channel $e$ is the input signal superposed by an error signal $r_e$, and we say there is an error on channel $e$ if $r_e \ne 0$. For a subset $A \in \mathcal{A}$ of channels,  an $A$-error is said to occur if an error occurs in every channel in $A$. For an instance $\mathcal{I}_c$ of the single-source single-terminal network error correction problem, a network code $\{\phi_e, \phi_t, \}_{e \in \mathcal{E}, t \in \mathcal{T} }$ is said to \emph{satisfy} a terminal $t$ under transmission $m_s$ if $\hat{M}_{s} = m_{s}$ when $M_s = m_s$, given the occurrence of any error pattern $\bm{r}=(r_e, e \in \mathcal{E} )$ that results in an $A$-error, for all $A \in \mathcal{A}$. A network code is said to satisfy problem $\mathcal{I}_c$ with error probability $\epsilon$ if the probability that $t$ is  satisfied is at least $1-\epsilon$. The probability is taken over the  distribution on the source message $M_s$.

For an instance $\mathcal{I}_c$ of the single-source single-terminal network error correction problem, rate $R$ is said to be \emph{feasible} if $R_s = R$ and for any $\epsilon>0$, there exists a network code with sufficiently large length that satisfies $\mathcal{I}_c$ with error probability at most $\epsilon$. Rate $R$ is said to be \emph{feasible with zero error} if $R_s = R$ and there exists a network code that satisfies $\mathcal{I}_c$ with zero error probability. Rate $R$ is said to be \emph{asymptotically feasible} if  for any $\delta > 0$, rate $(1-\delta)R$ is feasible. The capacity of $\mathcal{I}_c$ refers to the supremum over all rates $R$ that are asymptotically feasible and the zero-error capacity of $\mathcal{I}_c$ refers to the  supremum over all rates $R$ that are feasible with zero error.

Throughout the paper we denote by $\mathcal{R}_{\mathcal{A}}$ the set of all possible error patterns $\bm{r}$ that result in $A$-errors, where $A \in \mathcal{A}$.


\section{Reduction from Multiple-unicast to Network Error Correction: Zero Error Case}
In this section we reduce the multiple-unicast network coding problem (with no error) to the  single-source single-terminal network error correction problem with at most a single adversarial channel. We start with the zero-error case.
\begin{figure}[h!]
  \begin{center}
      \includegraphics[width=0.30\textwidth]{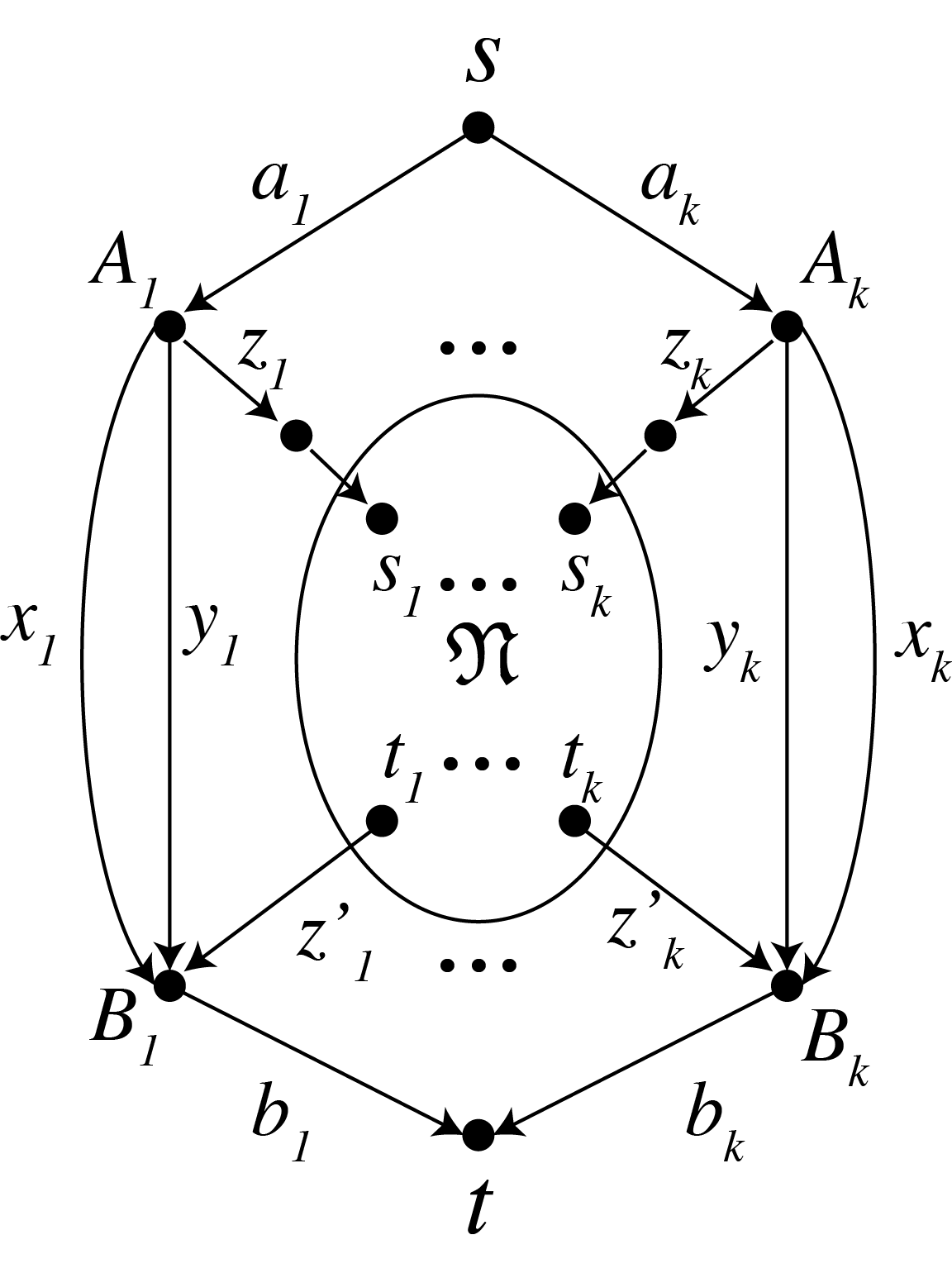}
  \caption{In the single-source single-terminal network error correction problem $\mathcal{I}_c$, the source $s$ wants to communicate with the terminal $t$. $\mathcal{N}$ is a general network with point-to-point noiseless channels. All edges outside $\mathcal{N}$ (i.e., edges for which at least one of its end-point does not belong to $\mathcal{N}$) have unit capacity. There is at most one error in this network, and this error can occur at any edge except $\{ a_i, b_i, 1 \le i \le k \}$. Namely, $\mathcal{A}$ includes all singleton sets of a single edge in the network except $\{a_i\}$ and $\{b_i\}$, $i=1,...,k$. Note that there are $k$ branches in total but only the first and the $k$-th branches are drawn explicitly. The multiple-unicast network coding problem $\mathcal{I}$ is defined on the network $\mathcal{N}$, where the $k$ source-destination pairs are $(s_i,t_i), i=1,...,k$, and all channels are error-free.}\label{zeroerr}
           \end{center}
\end{figure}
\begin{theorem}
Given any multiple-unicast network coding problem $\mathcal{I}$ with source-destination pairs $\{ (s_i, t_i), i=1,...,k \}$, a corresponding single-source single-sink network error correction problem $\mathcal{I}_c=(\mathcal{G},s,t,\mathcal{A})$ in which $\mathcal{A}$ includes sets with at most one edge can be constructed as specified in Figure \ref{zeroerr},  such that  the zero-error  capacity of $\mathcal{I}_c$ is $k$ if and only if unit rate is feasible with zero error in $\mathcal{I}$.
\end{theorem}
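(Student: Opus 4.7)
The theorem is an if-and-only-if, so my plan is to prove both directions by exhibiting explicit conversions between codes for the two problems, with most of the work going into the converse.

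\paragraph*{Forward direction (unit rate $k$-unicast feasible in $\mathcal{I}$ $\Rightarrow$ rate $k$ zero-error feasible in $\mathcal{I}_c$).}
Given a zero-error unit-rate code $\{\phi_e,\phi_{t_i}\}$ for $\mathcal{I}$, I would construct a zero-error rate-$k$ code for $\mathcal{I}_c$ by using this code as a black box inside $\mathcal{N}$. The source $s$ splits its message into $(m_1,\dots,m_k)\in\{0,1\}^{kn}$ and transmits along each branch so that the value arriving at $s_i$ over the protected edge $a_i$ is exactly $m_i$. Because $a_i$ is protected, any single adversarial error on the $s$-side of the $i$-th branch cannot alter what $s_i$ receives; each $s_i$ therefore launches the multiple-unicast code on $\mathcal{N}$ with input $m_i$, and each $t_i$ recovers $m_i$ in the error-free executions. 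The bit $m_i$ is then forwarded over the protected edge $b_i$ and the branch structure delivers it to $t$. The structural redundancy of the $k$ branches together with the protection of $a_i$, $b_i$ is what lets me absorb a single error anywhere in the network while still delivering all $k$ bits. Matching the upper bound of $k$ follows from the cut separating $s$ from $t$ through the protected edges, combined with the standard Singleton-type cut-set bound, since any cut of larger size can be attacked at one of its non-protected edges.

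\paragraph*{Converse direction (rate $k$ zero-error feasible in $\mathcal{I}_c$ $\Rightarrow$ unit rate $k$-unicast feasible in $\mathcal{I}$ with zero error).}
Fix any zero-error rate-$k$ code for $\mathcal{I}_c$. Let $X_i(M_s)$ denote the signal on $a_i$ and $Y_i(M_s)$ the signal on $b_i$ in the error-free execution under source message $M_s$. The central claim I must establish is a decomposition: the map $M_s \mapsto (X_1,\dots,X_k)$ is a bijection that, up to relabeling, factors as $M_s=(m_1,\dots,m_k)\mapsto (\alpha_1(m_1),\dots,\alpha_k(m_k))$, and similarly $(Y_1,\dots,Y_k)$ suffices to recover $M_s$ coordinatewise. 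Once I have this decomposition, the $k$-unicast code on $\mathcal{N}$ is obtained by having $s_i$ inject the signal $\alpha_i(m_i)$ at the head of its incoming protected edge $a_i$ (simulating what $s$ would do), running the same edge functions $\phi_e$ inside $\mathcal{N}$, and having $t_i$ decode $m_i$ from the image on $b_i$ via the inverse extracted from $t$'s decoder.

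\paragraph*{Main obstacle.}
The hard part is justifying the decomposition claim in the converse. I would argue it by contradiction using the power of the adversary to attack any non-protected edge. If two distinct messages $M_s=m$ and $M_s=m'$ induced the same $X_i$ on some branch while differing on another, I would exhibit an edge outside $\{a_i,b_i\}$ such that some single error on it makes the received signals at $t$ coincide for $(m,\text{no error})$ and $(m',\text{that error})$, contradicting zero-error decoding. Covering all the ways independence across branches could fail requires a careful case analysis: the adversary must be shown to be able to ``simulate'' any alternate message along a single branch without disturbing the others, so that the zero-error requirement forces the code on $\mathcal{I}_c$ to transport the $k$ bits through $k$ independent flows between the $s_i\to t_i$ pairs of $\mathcal{I}$. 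This independence is precisely what is needed to extract a valid zero-error $k$-unicast code on $\mathcal{N}$.
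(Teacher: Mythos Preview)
Your plan misreads the construction of Figure~\ref{zeroerr}, and this creates genuine gaps in both directions. The protected edge $a_i$ does not feed $s_i$ directly, nor does $b_i$ emerge directly from $t_i$. Rather, $a_i$ enters a node that fans out into three unprotected edges $x_i,y_i,z_i$; only $z_i$ enters $s_i$ inside $\mathcal{N}$; the output of $t_i$ is an unprotected edge $z'_i$; and the three edges $x_i,y_i,z'_i$ feed a node $B_i$ whose outgoing edge is $b_i$. Thus each branch contains three parallel $A_i\!\to\!B_i$ paths, two of which ($x_i,y_i$) bypass $\mathcal{N}$ entirely.

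In the forward direction your sketch does not handle an error inside $\mathcal{N}$ (or on $x_i,y_i,z_i,z'_i$): protection of $a_i,b_i$ alone does nothing here, and ``structural redundancy of the $k$ branches'' is not the mechanism, since an error in $\mathcal{N}$ hits only one branch anyway. The paper's code sets $a_i=x_i=y_i=z_i=M_i$, uses the unit-rate $k$-unicast code inside $\mathcal{N}$ so that $z'_i=M_i$ when $\mathcal{N}$ is error-free, and has $B_i$ output the \emph{majority} of $x_i,y_i,z'_i$. A single error can corrupt at most one of these three, so $b_i=M_i$ always.

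In the converse your ``decomposition'' aims at the wrong pair of edges. Since $s_i$ only sees $z_i$ and $t_i$ only produces $z'_i$, what you must extract is a bijection $z_i\leftrightarrow z'_i$ for each $i$; a bijection $a_i\leftrightarrow b_i$ is neither sufficient (you cannot hand $t_i$ the value of $b_i$) nor what your argument actually establishes. The paper proves the chain $z_i\leftrightarrow a_i\leftrightarrow x_i\leftrightarrow b_i\leftrightarrow z'_i$ one link at a time. For instance, to get $x_i,y_i,z_i$ injective in $a_i$ it uses that $\bm a$ is a bijection with $[2^n]^k$ to manufacture a third message $m_3$ agreeing with $m_1$ on all $a_j$, $j\neq i$, and with $m_2$ on $a_i$, and then plays an error on $y_i$ against an error on $z_i$ across the cut $\{a_1,\dots,a_{i-1},x_i,y_i,z_i,a_{i+1},\dots,a_k\}$. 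For $b_i\leftrightarrow z'_i$ it plays an error on $x_i$ against an error on $y_i$ at node $B_i$. Your ``case analysis'' paragraph gestures at this style of argument but never identifies these specific cuts and error pairs, and without the $x_i,y_i,z_i,z'_i$ edges in the picture there is no way to carry it out.
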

\begin{proof}
The zero-error capacity of $\mathcal{I}_c$ is upper bounded by $k$, because it is the min-cut from $s$ to $t$.

``$\Rightarrow$'' We show that the feasibility of a zero-error rate $k$ in $\mathcal{I}_c$ implies the feasibility of unit zero-error rate in $\mathcal{I}$.

Suppose a zero-error rate $k$ is achieved in $\mathcal{I}_c$ by a network code with length $n$, and denote the source message by $M$, then $M$ is uniformly distributed over $[2^{nk}]$. For any edge $e \in \mathcal{E}$, we denote by $e(m,\bm{r}) : [2^{nk}] \times \mathcal{R}_{\mathcal{A}} \to [2^n]$  the signal received on edge $e$ when the source message equals $m$ and the error pattern $\bm{r}$ occurs in the network. When the context is clear, we may denote $e(m,\bm{r})$ simply by $e$.

Let $\bm{b}(m,\bm{r}) = (b_1(m,\bm{r}), ..., b_k(m,\bm{r}))$, then because the edges $b_1,...,b_k$ form a cut-set from $s$ to $t$, $\bm{b}(m,\bm{r})$ must be injective with respect to $m$  due to the zero error decodability constraint. Formally, for two different messages $m_1 \ne m_2$, it follows from the zero error decodability constraint that $\bm{b}(m_1,\bm{r}_1) \ne \bm{b}(m_2,\bm{r}_2)$, $\forall \bm{r}_1, \bm{r}_2 \in \mathcal{R}_{\mathcal{A}}$. Note that the codomain of $\bm{b}$ is $[2^n]^k$, which has the same size as the set of messages $[2^{nk}]$. Therefore denote by $\bm{b}(m) \triangleq  \bm{b}(m,\bm{0})$, then $\bm{b}(m)$ is a bijective function and $\bm{b}(m,\bm{r}) = \bm{b}(m)$, $\forall \bm{r} \in \mathcal{R}_{\mathcal{A}}$.
Similarly $ \bm{a}(m,\bm{r}) = (a_1(m,\bm{r}), ..., a_k(m,\bm{r}))$ is also a bijective function of the message, regardless of the error patterns.



For any $e \in \mathcal{E}$, denote $e(m) \triangleq e(m,\bm{0})$. For $i=1,...,k$,
we claim that for any two messages $m_1, m_2 \in [2^{nk}]$ such that $a_i(m_1) \ne a_i(m_2)$, it follows that $x_i(m_1) \ne x_i(m_2)$, $y_i(m_1) \ne y_i(m_2)$ and $z_i(m_1) \ne z_i(m_2)$. Suppose for contradiction that there exist $m_1, m_2$ such that $a_i(m_1) \ne a_i (m_2)$ and such that the claim is not true, i.e.,
$x_i(m_1) = x_i(m_2)$ or $y_i(m_1) = y_i(m_2)$ or $z_i(m_1) = z_i(m_2)$. First consider the case that $x_i(m_1) = x_i(m_2)$.
Because of the one-to-one correspondence between $m$ and $\bm{a}$, there exists a message $m_3 \ne m_1, m_2$ and such that $\bm{a}(m_3) = (a_1(m_1), ..., a_{i-1}(m_1),a_i(m_2),a_{i+1}(m_1), ..., a_k(m_1))$. Then $x_i(m_1)=x_i(m_3)$ because by hypothesis $x_i(m_1) = x_i(m_2)$.
Consider the following two scenarios. In the first scenario, $m_1$ is transmitted, and an error turns $y_i(m_1)$ into $y_i(m_3)$; in the second scenario,  $m_3$ is transmitted, and an error turns $z_i(m_3)$ into $z_i(m_1)$. Then the cut-set signals $a_1,...,a_{i-1}, x_{i}, y_i, z_i, a_{i+1},...,a_k$ are exactly the same in both scenarios, and so it is impossible for $t$ to distinguish $m_1$ from $m_3$, a contradiction to the zero error decodability constraint.
Therefore $x_i(m_1) \ne x_i(m_2)$. With a similar argument it follows that $y_i(m_1) \ne y_i(m_2)$ and $z_i(m_1) \ne z_i(m_2)$, and the claim is proved.

The claim above suggests that  $x_i, y_i$ and $z_i$, as functions of $a_i$, are injective. They are also surjective functions because the domain and codomain are both $[2^n]$.
Hence there are one-to-one correspondences between $a_i$, $x_i$, $y_i$ and $z_i$.

 Next we show that for any two messages $m_1, m_2$, if $b_i(m_1) \ne b_i(m_2)$, then $z_i'(m_1) \ne z_i'(m_2)$. Suppose for contradiction that there exists $m_1 \ne m_2$ such that $b_i(m_1) \ne b_i(m_2)$ and $z_i'(m_1) = z_i'(m_2)$.
Then if $m_1$ is transmitted and an error $\bm{r}_1$ turns $x_i(m_1)$ into $x_i(m_2)$, the node $B_i$ will receive the same signals as in the case that $m_2$ is transmitted and an error $\bm{r_2}$ turns $y_i(m_2)$ into $y_i(m_1)$. Therefore $b_i(m_1,\bm{r}_1) = b_i(m_2, \bm{r}_2)$. But, as shown above, because $b_i(m_1, \bm{r}_1) = b_i(m_1)$ and $b_i(m_2, \bm{r}_2) = b_i(m_2)$, it follows that $b_i(m_1) = b_i(m_2)$, a contradiction.
This suggests that if $z'_i(m_1) = z'_i(m_2)$ then $b_i(m_1) = b_i(m_2)$ and therefore $b_i$ is a function of $z_i'$. The function is surjective because $b_i$ takes all $2^n$ possible values. Then since the domain and the codomain are both $[2^n]$, it follows that $b_i$ must be a bijective function of $z_i'$. With the same argument it follows that $b_i$ is also a bijective function of $x_i$.

Hence $z_i$ is a bijection of $a_i$, $a_i$ is a bijection of $x_i$, $x_i$ is a bijection of $b_i$, and $b_i$ is a bijection of $z_i'$. Therefore for all $1 \le i \le k$, $z_i$ is a bijection of $z_i'$, and therefore unit rate is feasible with zero error in $\mathcal{I}$.

``$\Leftarrow$'' Conversely, we show that the feasibility of the unit zero-error rate in $\mathcal{I}$ implies the achievability of a zero-error rate of $k$ in $\mathcal{I}_c$. A constructive scheme is shown in Figure \ref{ach}. In $\mathcal{I}_c$, the source lets $M = (M_1,...,M_k)$, where the $M_i$'s are i.i.d. uniformly distributed over $[2^n]$. Let the network code be $a_i(M)=x_i(M)=y_i(M)=z_i(M)=z'_i(M)=M_i$, $i=1,...,k$, and let node $B_i$, $i=1,...,k$, perform majority decoding. It is straightforward to see that the scheme ensures that $b_i(M) = M_i$ under all error patterns in $\mathcal{R}_{\mathcal{A}}$. Therefore rate $k$ is feasible with zero error in $\mathcal{I}_c$. This rate achieves capacity since it is equal to the min-cut from $s$ to $t$.


\end{proof}

\begin{figure}[h!]
  \begin{center}
      \includegraphics[width=0.30\textwidth]{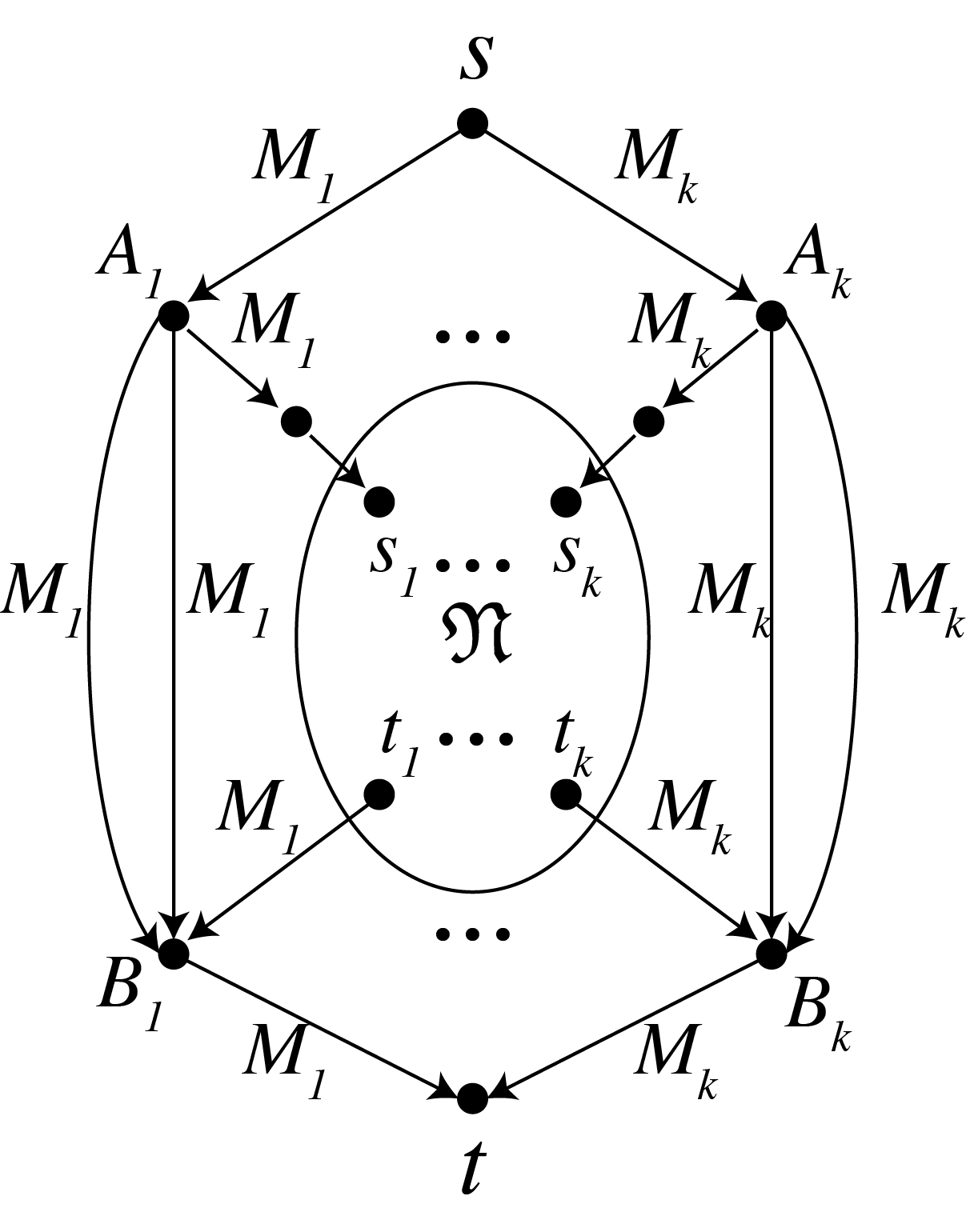}
  \caption{A scheme to achieve zero-error rate $k$ in $\mathcal{I}_c$ given that unit rate is feasible with zero error in $\mathcal{I}$. $M=(M_1,...,M_k)$ and node $B_i$ performs majority decoding.}\label{ach}
           \end{center}
\end{figure}

\section{Reduction from Multiple-unicast to Network Error Correction: Vanishing Error Case}
In this section we show that a similar but slightly weaker result holds under the vanishing error model.
\begin{theorem}\label{epsilonerrk2}
Given any multiple-unicast network coding problem $\mathcal{I}$ with source-destination pairs $\{ (s_i, t_i), i=1,...,k \}$, a corresponding single-source single-sink network error correction problem $\mathcal{I}_c=(\mathcal{G},s,t,\mathcal{A})$  in which $\mathcal{A}$ includes sets with at most a single edge can be constructed as specified in Figure \ref{zeroerr},  such that  if unit rate is feasible in $\mathcal{I}$ then rate $k$ is feasible in $\mathcal{I}_c$. Conversely, if rate $k$ is feasible in $\mathcal{I}_c$ then unit rate is asymptotically feasible in $\mathcal{I}$.
\end{theorem}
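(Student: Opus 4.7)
The first direction ($\Leftarrow$) admits a near-verbatim adaptation of the achievability scheme from the zero-error case (Figure \ref{ach}). The plan is to fix any $\epsilon > 0$, choose a multiple-unicast network code for $\mathcal{I}$ at unit rate with error probability at most $\epsilon$, and keep the outer structure identical: at $s$ generate $M=(M_1,\dots,M_k)$ with each $M_i$ uniform on $[2^n]$, set $a_i=x_i=y_i=z_i=M_i$, feed $z_i$ into the inner code at $s_i$ so that $z'_i$ is the decoded $\hat{M}_i$ at $t_i$, and let each $B_i$ perform majority decoding on $(x_i,y_i,z'_i)$. The analysis splits on where the adversary strikes: any error strictly inside $\mathcal{N}$ leaves every $x_i,y_i$ equal to $M_i$, so the majority at $B_i$ yields $M_i$ regardless of the (possibly corrupted) $z'_i$; any error on one of $\{x_j,y_j,z_j,z'_j\}$ corrupts at most one of the three inputs at $B_j$, so the remaining two correct votes succeed except on the $\le\epsilon$-mass event that the inner code independently errs on coordinate $j$. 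Taking the worst case over adversary strategies yields overall error probability at most $\epsilon$, which can be driven to $0$.

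The converse direction ($\Rightarrow$) requires an approximate analogue of the chain of exact bijections used in the zero-error proof. First I would apply Fano's inequality to the error-free execution of the assumed $\mathcal{I}_c$-code to get $H(M\mid \bm{b}(M))\le 1+\epsilon nk$, and then use the cut-sets $\{b_1,\dots,b_k\}$ and $\{a_1,\dots,a_k\}$ (together with the fact that $\bm{b}$ is a function of $\bm{a}$) to conclude that $H(\bm{a})$ and $H(\bm{b})$ are both $nk-o(n)$. Subadditivity then forces each coordinate $a_i$ and $b_i$ to be individually nearly uniform on $[2^n]$ and nearly mutually independent across $i$, and via the encoding relations at $A_i$ this propagates to $x_i,y_i,z_i,z'_i$.

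Next I would replay the adversarial arguments of the zero-error proof in approximate form. The same two-scenario constructions used to rule out $x_i(m_1)=x_i(m_2)$ when $a_i(m_1)\ne a_i(m_2)$, and to rule out $z'_i(m_1)=z'_i(m_2)$ when $b_i(m_1)\ne b_i(m_2)$, should each exhibit an adversarial strategy whose failure rate is at least the measure of the collision set; hence for each such pair-condition the number of violating message pairs is $O(\epsilon)\cdot 2^{2nk}$. Chaining these approximate injectivities along $a_i\leftrightarrow x_i\leftrightarrow b_i\leftrightarrow z'_i$ and $a_i\leftrightarrow z_i$ yields an approximate coordinate-wise bijection between $z_i$ and $z'_i$ on a message subset of density $1-o(1)$ in which the branch behavior factors cleanly across $i$.

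Finally I would construct the multi-unicast code at rate $1-\delta$ for arbitrary $\delta>0$. At each $s_i$ take $M'_i\in[2^{n(1-\delta)}]$ and inject it into the ``good'' subset of $[2^n]$ on which $z_i\mapsto z'_i$ is essentially bijective; inside $\mathcal{N}$ run the inner portion of the $\mathcal{I}_c$-code unchanged; at each $t_i$ invert the approximate bijection, computed offline from the known $\mathcal{I}_c$-code. The main obstacle, and where the proof genuinely exceeds the zero-error argument, is accommodating the fact that the inner $\mathcal{N}$-code was designed against the joint distribution of $\bm{z}$ induced by a single uniform $M\in[2^{nk}]$, whereas the multi-unicast simulation drives it with independent $M'_i$'s. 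Justifying that this distribution shift perturbs the error probability only by $o(1)$ demands total-variation bounds derived from the near-uniformity of the $a_i$-marginals together with the near-independence across branches obtained earlier; absorbing this perturbation into the $\delta$-slack of asymptotic feasibility is exactly what differentiates the vanishing-error statement from its zero-error counterpart.
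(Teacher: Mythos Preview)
Your achievability argument ($\Leftarrow$) is exactly the paper's.

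For the converse ($\Rightarrow$), both your plan and the paper's reduce to showing that the channel $z_i\to z'_i$ inside $\mathcal{N}$ supports rate close to $1$, and both reuse the adversarial two-scenario tricks from the zero-error proof. The execution differs in two respects worth noting. First, the paper phrases everything in terms of mutual information rather than approximate bijections: it restricts to a good subset $\mathcal{M}^\circ$ on which $\bm a,\bm b,\bm z'$ are exactly injective, lower-bounds $H(b_i)$ and upper-bounds $H(b_i\mid a_i)$ and $H(b_i\mid z'_i)$ by carefully counting how many ``illegal'' $\bm b$-tuples each adversarial strategy manufactures, and then chains via the elementary inequality $I(X;Z)\ge I(X;Y)+I(Y;Z)-H(Y)$ to conclude $I(z_i;z'_i)/n\to 1$. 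Second---and this is where your chain may hit friction---the paper does \emph{not} go through $x_i$. To link $a_i$ with $b_i$ it uses a different attack: the adversary corrupts $z'_i$ to each of the values $z'_i(m_j)$ realized within $\mathcal{M}(\hat a_i)$, forcing $B_i$ to output each of the corresponding $b_i$-values, and every extra value produces a distinct element of $\mathcal{B}^\times$. This sidesteps the auxiliary message $m_3$ from the zero-error proof, whose existence relied on $\bm a$ being an exact bijection of $[2^{nk}]$ onto $[2^n]^k$; in your vanishing-error adaptation that construction would need separate justification, since the prescribed $\bm a$-tuple for $m_3$ need not be hit by any good message.

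For the final decoupling step the paper again proceeds differently from your good-subset construction: it replaces the $\mathcal{M}^\circ$-induced distribution on $\bm a$ by the i.i.d.\ uniform product on $[2^n]^k$, observes that conditioning on $\bm a\in\mathcal{A}^\circ$ (an event of probability $\ge 1-\epsilon'$) recovers the previous bound, so $I_{\bm a\gets[2^n]^k}(z_i;z'_i)/n$ stays close to $1$, and then invokes the channel coding theorem for each coordinate with the remaining $z_j$'s acting as independent noise. Your total-variation plan can be made to work, but it requires the good set to be essentially a product across branches, which is precisely what the switch to the product distribution and the mutual-information conditioning deliver for free in the paper's argument.
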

We first give a simple lemma.
\begin{lemma}\label{info}
Let $X,Y,Z$ be three arbitrary random variables. Then
\begin{align*}
I(X;Z) \ge I(X;Y) + I(Y;Z) - H(Y).
\end{align*}
\end{lemma}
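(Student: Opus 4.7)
The plan is to apply the chain rule for mutual information to $I(X,Y;Z)$ in two different ways and then combine the two expansions with elementary nonnegativity and conditioning bounds.

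First, I would write $I(X,Y;Z)$ two ways:
\begin{align*}
I(X,Y;Z) = I(X;Z) + I(Y;Z|X) = I(Y;Z) + I(X;Z|Y).
\end{align*}
Equating the two right-hand sides and solving for $I(X;Z)$ gives
\begin{align*}
I(X;Z) = I(Y;Z) + I(X;Z|Y) - I(Y;Z|X).
\end{align*}

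Next, I would drop the nonnegative term $I(X;Z|Y) \ge 0$ to obtain the lower bound $I(X;Z) \ge I(Y;Z) - I(Y;Z|X)$. The final step is to bound the conditional mutual information by the conditional entropy, $I(Y;Z|X) \le H(Y|X)$, and then use $H(Y|X) = H(Y) - I(X;Y)$. Substituting yields
\begin{align*}
I(X;Z) \ge I(Y;Z) - H(Y) + I(X;Y),
\end{align*}
which is exactly the claimed inequality.

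There is no real obstacle here: the entire argument is a two-line manipulation of standard identities. The only step that requires a moment of thought is recognizing that expanding $I(X,Y;Z)$ symmetrically in $X$ and $Y$ produces precisely the right combination of terms to match the right-hand side of the lemma after using $H(Y|X) = H(Y) - I(X;Y)$.
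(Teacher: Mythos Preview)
Your proof is correct and essentially equivalent to the paper's. The paper expands $I(X;Z)=H(Z)-H(Z|X)$ and manipulates entropies directly, using $H(Z|X)\le H(Z,Y|X)$ and $H(Z|Y,X)\le H(Z|Y)$; you package the same two nonnegativity facts ($H(Y|X,Z)\ge 0$ and $I(X;Z|Y)\ge 0$) via the chain rule for $I(X,Y;Z)$, which is just a different bookkeeping of the identical argument.
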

\begin{proof}
\begin{align*}
I(X;Z) & = H(Z) - H(Z|X)\\
& \ge H(Z) - H(Z,Y|X)\\
& = H(Z) - H(Y|X) - H(Z|Y,X)\\
& \ge H(Z) - H(Y|X) -H(Z|Y)\\
& = H(Y) - H(Y|X) + H(Z) - H(Z|Y) -H(Y)\\
& = I(X;Y) + I(Y;Z) - H(Y)
\end{align*}
\end{proof}
In the following we prove Theorem \ref{epsilonerrk2}.


``$\Rightarrow$'' We first show the feasibility of rate $k$ in $\mathcal{I}_c$ implies the asymptotic feasibility of unit rate in $\mathcal{I}$.

We will take the following path. In $\mathcal{I}_c$, we apply the network code that achieves rate $k$ with message uniformly distributed over a selected subset of $[2^{nk}]$. Then we show that for $i=1,...,k$, this induces a large mutual information between random variables $a_i$ and $b_i$, between $b_i$ and $z'_i$, and between $z_i$ and $a_i$. Hence it implies a large mutual information between $z_i$ and $z'_i$ and finally we show that this implies the asymptotic feasibility of unit rate in $\mathcal{I}$.\vspace{2mm}

\noindent\textbf{Step 1: Select a subset of messages.}

Suppose in $\mathcal{I}_c$ a rate of $k$ is achieved by a network code with length $n$ and with a probability of error $\epsilon$. Let $M$ be the source message uniformly distributed over $\mathcal{M} = [2^{kn}]$ and let $\hat{M}$ be the output of the decoder at the terminal. Partition $\mathcal{M}$ into good and bad messages $\mathcal{M}^{g} + \mathcal{M}^{b}$ in the way that  $m \in \mathcal{M}^{g}$ if the network code satisfies $t$ under transmission $m$, i.e., the terminal decodes successfully $\hat{M} = m$ when $M=m$ for all $\bm{r} \in \mathcal{R}_{\mathcal{A}}$.
Therefore if $m \in \mathcal{M}^{b}$ then there exists $\bm{r} \in \mathcal{R}_{\mathcal{A}}$ such that $\hat{M} \ne m$ if $M=m$ and $\bm{r}$ occurs, i.e., $\bm{r}$ results in a decoding error.
By the hypothesis on the probability of error it follows that $|\mathcal{M}^{b}| \le 2^{kn} \epsilon$.

For $i=1,...,k$, let $x_i(m,\bm{r}): \mathcal{M} \times \mathcal{R}_{\mathcal{A}} \to [2^n]$ be the  signal received from channel $x_i$ when $m$ is transmitted by the source and the error pattern $\bm{r}$ happens. Let $x_i(m) = x_i(m,\bm{0})$, $\bm{x}(m,\bm{r}) = (x_1(m,\bm{r}) , ...,  x_k(m,\bm{r}) )$ and $\bm{x}(m) = (x_1(m), ...,  x_k(m))$. We define functions $a_i, b_i, y_i, z_i, z'_i, \bm{a}, \bm{b}, \bm{y}, \bm{z}, \bm{z}'$ in a similar way.

\begin{lemma}\label{ltwo}
There exists $\mathcal{M}^\circ \subset \mathcal{M}^g$ such that for any $m_1, m_2 \in \mathcal{M}^\circ$, $m_1 \ne m_2$, it follows that $\bm{a}(m_1) \ne \bm{a}(m_2)$, $\bm{b}(m_1) \ne \bm{b}(m_2)$ and $\bm{z}'(m_1) \ne \bm{z}'(m_2)$, and such that $|\mathcal{M}^\circ| \ge 2^{kn}(1-\epsilon')$, where $\epsilon'=4\epsilon$.
\end{lemma}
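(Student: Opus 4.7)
The plan is to construct $\mathcal{M}^\circ$ by starting from $\mathcal{M}^g$ and then pruning a small number of additional messages to enforce injectivity of $\bm{z}'$; $\bm{a}$ and $\bm{b}$ will require no further work. The warm-up observation is that $\bm{a}$ and $\bm{b}$ are already injective on $\mathcal{M}^g$. Indeed, if $\bm{b}(m_1)=\bm{b}(m_2)$ for distinct $m_1,m_2\in\mathcal{M}^g$, then the deterministic decoder, seeing the same cut-set signals under the error-free transmissions of $m_1$ and $m_2$, must return the same $\hat{M}$, contradicting that both messages are decoded correctly. Since $\bm{b}$ is a deterministic function of $\bm{a}$ when $\bm{r}=\bm{0}$, injectivity of $\bm{a}$ on $\mathcal{M}^g$ follows at once.

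The substance of the proof is to show that only a small fraction of good messages participate in $\bm{z}'$-collisions. The obstacle is that $\bm{z}'$ alone does not determine the decoder's output -- $b_i$ depends also on $x_i$ and $y_i$ -- so a $\bm{z}'$-collision among good messages is not immediately forbidden. My plan is to reuse the per-branch adversarial construction of the zero-error proof in a counting mode. Whenever $m_1\ne m_2$ lie in $\mathcal{M}^g$ with $z'_i(m_1)=z'_i(m_2)$, consider the two scenarios ``transmit $m_1$ while the adversary corrupts $x_i\mapsto x_i(m_2)$'' and ``transmit $m_2$ while the adversary corrupts $y_i\mapsto y_i(m_1)$''; by construction both scenarios present the same input to $B_i$, so the outputs of all $k$ branches $B_1,\ldots,B_k$ match whenever $\bm{b}(m_1)$ and $\bm{b}(m_2)$ agree outside coordinate $i$, forcing one of $m_1,m_2$ to be mis-decoded. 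Using the injectivity of $\bm{b}$ on $\mathcal{M}^g$ and a careful coordinate-wise charging of each collision to a distinct decoding-error event, I aim to bound the total number of good messages participating in any $\bm{z}'$-collision by $3\epsilon\cdot 2^{kn}$.

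Given this bound, the construction of $\mathcal{M}^\circ$ is then immediate: partition $\mathcal{M}^g$ into $\bm{z}'$-equivalence classes and keep one representative per class. At most $3\epsilon\cdot 2^{kn}$ messages are dropped, and the $|\mathcal{M}^b|\le\epsilon\cdot 2^{kn}$ bad messages were already excluded, so a union bound yields $|\mathcal{M}^\circ|\ge 2^{kn}(1-4\epsilon)$. By construction $\bm{z}'$ is injective on $\mathcal{M}^\circ$, while the injectivity of $\bm{a}$ and $\bm{b}$ on $\mathcal{M}^g$ is inherited by any subset.

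The hard part will be the middle step. The zero-error proof was able to invoke an auxiliary message $m_3$ with $\bm{a}(m_3)=(a_1(m_1),\ldots,a_i(m_2),\ldots,a_k(m_1))$ by appealing to the bijectivity of $\bm{a}$ on all of $\mathcal{M}$; under the vanishing-error model such an $m_3$ need not exist or may itself be bad, and moreover the indistinguishability-at-$B_i$ argument only yields a contradiction when $\bm{b}(m_1)$ and $\bm{b}(m_2)$ agree on all coordinates except $i$. Controlling the pairs outside this favourable configuration -- by iterating the argument over the $k$ branches and charging the remainder to decoding errors via the adversary's choice of attacked edge -- is where the $O(\epsilon)$ scaling needed for Lemma~\ref{info} to later deliver a large $I(z_i;z'_i)$ is secured.
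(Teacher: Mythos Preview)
Your overall structure matches the paper's: the injectivity of $\bm{a}$ and $\bm{b}$ on $\mathcal{M}^g$ is immediate from decodability, and the real task is to bound how many good messages participate in a $\bm{z}'$-collision. You have also identified the right adversarial gadget (corrupt $x_j\mapsto x_j(m_2)$ versus $y_j\mapsto y_j(m_1)$). The gap is in what you charge the collisions to.

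You write that the two scenarios ``forc[e] one of $m_1,m_2$ to be mis-decoded'' provided $\bm{b}(m_1)$ and $\bm{b}(m_2)$ agree off coordinate $i$, and you then flag the remaining pairs as the ``hard part''. But neither $m_1$ nor $m_2$ can ever be mis-decoded: both lie in $\mathcal{M}^g$, so by definition the terminal recovers them under \emph{every} $\bm{r}\in\mathcal{R}_\mathcal{A}$. Charging to $\mathcal{M}^b$ therefore cannot work, and your anticipated case split is a symptom of aiming at the wrong target.

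The paper charges instead to $\mathcal{B}^b:=[2^n]^k\setminus\{\bm{b}(m):m\in\mathcal{M}^g\}$, which has size at most $\epsilon\,2^{kn}$ because $\bm{b}$ is injective on $\mathcal{M}^g$. Given a full collision $\bm{z}'(m_1)=\bm{z}'(m_2)$, pick $j$ with $b_j(m_1)\neq b_j(m_2)$. Since the single error touches only branch $j$, the scenario-$1$ cut vector is $(b_1(m_1),\ldots,b_j(m_1,\bm{r}_1),\ldots,b_k(m_1))$, which the decoder must map to $m_1$ because $m_1\in\mathcal{M}^g$. From $b_j(m_1,\bm{r}_1)=b_j(m_2,\bm{r}_2)$ and $b_j(m_1)\neq b_j(m_2)$, at least one $j$-th entry has moved; say $b_j(m_1,\bm{r}_1)\neq b_j(m_1)$. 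Then the scenario-$1$ vector differs from $\bm{b}(m_1)$ yet still decodes to $m_1$, so it cannot equal $\bm{b}(m')$ for any good $m'$ and hence lies in $\mathcal{B}^b$. No hypothesis on the other coordinates of $\bm{b}(m_1),\bm{b}(m_2)$ is needed, so there is no ``unfavourable configuration'' left to handle.

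For the count, within each $\bm{z}'$-class $\mathcal{M}(\hat{\bm{z}}')$ of size larger than one, pair off the messages; each pair produces an element of $\mathcal{B}^b$ that the decoder sends to one member of that pair, so these elements are distinct across pairs and across classes. Hence $\sum_{|\mathcal{M}(\hat{\bm{z}}')|>1}\lfloor|\mathcal{M}(\hat{\bm{z}}')|/2\rfloor\le|\mathcal{B}^b|\le\epsilon\,2^{kn}$, giving $|\mathcal{M}^{\text{poor}}|\le 3\epsilon\,2^{kn}$. The paper then discards \emph{all} poor messages (rather than keeping one representative per class as you propose); either choice yields $|\mathcal{M}^\circ|\ge(1-4\epsilon)2^{kn}$.
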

\begin{proof}
As $\bm{a}$ and $\bm{b}$ are cut-set signals, for any $m_1, m_2 \in \mathcal{M}^g$, $m_1 \ne m_2$, it follows that $\bm{a}(m_1) \ne \bm{a}(m_2)$ and $\bm{b}(m_1) \ne \bm{b}(m_2)$. Setting $\mathcal{B}^g = \{ \bm{b}(m) : m \in \mathcal{M}^g \}$, it holds that $|\mathcal{B}^g| \ge (1-\epsilon) 2^{kn} $, and setting $\mathcal{B}^b = [2^n]^k \backslash \mathcal{B}^g$, it holds that $|\mathcal{B}^b| \le  2^{kn} \epsilon$.

For any $m \in \mathcal{M}^g$, it is said to be a \emph{poor} message if there exists another $m' \in \mathcal{M}^g$ such that $\bm{z}'(m) = \bm{z}'(m')$. Consider an arbitrary poor message $m_1$. By definition there exists $m_2 \in \mathcal{M}^g$, $m_2 \ne m_1$, such that $\bm{z}'(m_1) = \bm{z}'(m_2)$. Since $\bm{b}(m_1) \ne \bm{b}(m_2)$, there exists $j$ such that $b_j(m_1) \ne b_j(m_2)$. Let $\bm{r}_1$ be the error pattern that changes the signal on $x_j$ to $x_j(m_2)$, and let $\bm{r}_2$ be the error pattern that changes the signal on $y_j$ to $y_j(m_1)$. Then if $m_1$ is sent and $\bm{r}_1$ happens, node $B_j$ will receive the same inputs as in the situation that $m_2$ is sent and $\bm{r}_2$ happens. Therefore $b_j(m_1,\bm{r}_1) = b_j(m_2, \bm{r}_2)$ and it follows that either $b_j(m_1, \bm{r}_1) \ne b_j(m_1)$ or $b_j(m_2, \bm{r}_2) \ne b_j(m_2)$. In the former case, the tuple of signals $(b_1(m_1, \bm{r}_1),...,b_j(m_1,\bm{r}_1),...,b_k(m_1, \bm{r}_1) ) = (b_1(m_1),...,b_j(m_1,\bm{r}_1),...,b_k(m_1) )$ will be decoded by the terminal to message $m_1$ (because by hypothesis $m_1 \in \mathcal{M}^g$, which is decodable under any error $\bm{r} \in \mathcal{R}_{\mathcal{A}}$). It is therefore an element of $\mathcal{B}^b$ as it does not equal $\bm{b}(m_1)$.
Similarly, in the latter case,  $(b_1(m_2, \bm{r}_2),...,b_j(m_2,\bm{r}_2),...,b_k(m_2, \bm{r}_2) ) = (b_1(m_2),...,b_j(m_2,\bm{r}_2),...,b_k(m_2) )$ will be decoded by the terminal to message $m_2$ and is an element of $\mathcal{B}^b$. For $\hat{\bm{z}} \in [2^n]^k$, let $\mathcal{M}({\hat{\bm{z}}'})$ be the set of messages $\{m \in \mathcal{M}^g :  \bm{z}'(m) = \hat{\bm{z}}' \}$. Then if $|\mathcal{M}({\hat{\bm{z}}'})| > 1$, by the argument above, there are at least $\left\lfloor \frac{|\mathcal{M}({\hat{\bm{z}}'})|}{2} \right \rfloor \ge \frac{|\mathcal{M}({\hat{\bm{z}}'})|}{3} $ elements of $\mathcal{B}^b$, such that each of them will be decoded by the terminal to some message $m \in \mathcal{M}(\hat{\bm{z}}')$. Let $\mathcal{M}^{\text{poor}}$ be the set of all poor messages, then
\begin{align*}
|\mathcal{M}^{\text{poor}}| = \sum_{\hat{\bm{z}}': |\mathcal{M} (\hat{\bm{z}}') |>1 } |\mathcal{M} (\hat{\bm{z}}') | \le 3|\mathcal{B}^b| \le 3\epsilon \cdot 2^{kn} .
\end{align*}
Let $\mathcal{M}^\circ = \mathcal{M}^g \backslash \mathcal{M}^{\text{poor}}$, then $|\mathcal{M}^\circ| = |\mathcal{M}^g| - |\mathcal{M}^{\text{poor}}| \ge (1-\epsilon')2^{kn}$, where $\epsilon' = 4\epsilon$.  This proves the assertion.
\end{proof}

Let $\mathcal{A}^{\circ} = \{\bm{a}(m) | m \in \mathcal{M}^{\circ}  \}$ and $\mathcal{A}^{\times} = [2^n]^k \backslash \mathcal{A}^{\circ}$, then $|\mathcal{A}^{\circ}| = |\mathcal{M}^{\circ}| \ge (1-\epsilon')2^{kn} $ since by Lemma \ref{ltwo}, $\bm{a}(m_1) \ne \bm{a}(m_2)$ for $m_1, m_2 \in \mathcal{M}^\circ, m_1 \ne m_2$. Therefore $|\mathcal{A}^\times| \le 2^{kn}\epsilon'$. Similarly let $\mathcal{B}^{\circ} = \{\bm{b}(m) | m \in \mathcal{M}^{\circ}  \}$ and $\mathcal{B}^{\times} = [2^n]^k \backslash \mathcal{B}^{\circ}$, then $|\mathcal{B}^{\times}| \le 2^{kn} \epsilon'$.
For $i=1,...,k$ , let $\mathcal{A}^\circ_i = \{ a_i(m) | m \in \mathcal{M}^\circ \}$, then $|\mathcal{A}^\circ| \ge (1-\epsilon')2^{kn}$ implies that $|\mathcal{A}^\circ_i| \ge (1-\epsilon') 2^n $. For $\hat{a}_i \in \mathcal{A}^\circ_i$, let $\mathcal{M}(\hat{a}_i) =\{ m \in \mathcal{M}^\circ : a_i(m) =\hat{a}_i  \}$ and
define $N(\hat{a}_i) = |\mathcal{M}(\hat{a}_i)|$. Furthermore define $\mathcal{A}^{\circ}_{i,l} = \{ \hat{a}_i \in \mathcal{A}^\circ_i | N(\hat{a}_i) \ge (1 - l\epsilon') 2^{(k-1)n}  \}$.

We show that the size of $\mathcal{A}^{\circ}_{i,l}$ is large. Consider any $\hat{a}_i \in \mathcal{A}^\circ_i \backslash \mathcal{A}_{i,l}^{\circ}$, then by definition $|\{ (a_1,...,a_k) \in \mathcal{A}^\circ: a_i = \hat{a}_i  \}| < (1-l\epsilon')2^{(k-1)n}$. And because $|\{(a_1,...,a_k) \in [2^n]^k : a_i = \hat{a}_i \}| = 2^{(k-1)n}$, there are at least $l\epsilon' \cdot 2^{(k-1)n}$ elements of $\mathcal{A}^\times$ such that their $i$-th entry equals to $\hat{a}_i$. Therefore $ |\mathcal{A}^\circ_i \backslash \mathcal{A}_{i,l}^{\circ}|
\cdot l \epsilon'  \cdot 2^{(k-1)n} \le |\mathcal{A}^\times| \le 2^{kn} \epsilon'$, and $|\mathcal{A}^\circ_i \backslash \mathcal{A}_{i,l}^{\circ}| \le 2^n/l$. So $|\mathcal{A}^{\circ}_{i,l} | \ge |\mathcal{A}^\circ_i| -  2^n/l \ge (1 - \epsilon' - 1/l )2^n$. Define  $\mathcal{B}^\circ_i$, $\mathcal{B}^{\circ}_{i,l}$, $\mathcal{Z}'^\circ_i$ and $\mathcal{Z}'^\circ_{i,l}$ similarly, then it follows from the same argument that $  |\mathcal{B}^{\circ}_{i,l}| , |\mathcal{Z}'^\circ_{i,l}|\ge (1 - \epsilon' - 1/l )2^n$. \vspace{2mm}

\noindent\textbf{Step 2: Connect $a_i$ and $b_i$.}

Let $M^\circ$ be the random variable that is uniformly distributed over $\mathcal{M}^\circ$. In the following we show that if $M^\circ$ is the source message then $I(a_i;b_i)/n  \to 1$ as $\epsilon \to 0$, $i=1,...,k$.
We start by lower bounding the entropy $H(b_i)$. Consider any $\hat{b}_i \in \mathcal{B}_{i,l}^{ \circ}$, then
\begin{align}
\Pr \{ b_i = \hat{b}_i \} = \frac{N(\hat{b}_i)}{|\mathcal{M}^\circ|} & \ge \frac{(1-l\epsilon')2^{(k-1)n}}{2^{kn}} = \frac{1-l\epsilon'}{2^n} \label{nbl}\\
\Pr \{ b_i = \hat{b}_i \} = \frac{N(\hat{b}_i)}{|\mathcal{M}^\circ|}  & \le  \frac{2^{(k-1)n}}{(1 - \epsilon')2^{kn}} = \frac{1}{2^n(1-\epsilon')} \label{nbu}
\end{align}
Therefore,
\begin{align}
  H(b_i) & = - \sum_{\hat{b}_i \in \mathcal{B}_i^\circ} \Pr \{ \hat{b}_i \} \log (\Pr \{ \hat{b}_i \}) \nonumber \\
  & \ge - \sum_{\hat{b}_i \in \mathcal{B}^{\circ}_{i,l}} \Pr \{ \hat{b}_i \} \log (\Pr \{ \hat{b}_i \}) \nonumber \\
  &  \stackrel{(a)}{\ge} - |\mathcal{B}_{i,l}^{\circ}| \cdot \frac{1-l\epsilon'}{2^n} \cdot \log\left(\frac{1}{2^n(1-\epsilon')}\right) \nonumber \\
  & \ge  \left(1 - \epsilon' - \frac{1}{l} \right)2^n \cdot \frac{1-l\epsilon'}{2^n}\cdot \log(2^n(1-\epsilon')) \nonumber \\
  & = \left(1 - \epsilon' - \frac{1}{l} \right) (1-l\epsilon')(n + \log(1-\epsilon')) \label{hb}
\end{align}
where (a) is due to (\ref{nbl}) and (\ref{nbu}). Similarly it follows that,
\begin{align}
H(a_i) & \ge \left(1 - \epsilon' - \frac{1}{l} \right) (1-l\epsilon')(n + \log(1-\epsilon')) \label{ha}\\
H(z'_i) & \ge \left(1 - \epsilon' - \frac{1}{l} \right) (1-l\epsilon')(n + \log(1-\epsilon')), \label{hz'}
\end{align}

In the next step we upper bound $H(b_i|a_i)$. Recall that $\mathcal{M}(\hat{a}_i) = \{ m \in \mathcal{M}^\circ: a_i(m) = \hat{a}_i \}$, we first prove a useful lemma.
\begin{lemma}\label{three}
Suppose $\{b_i(m):m \in \mathcal{M}(\hat{a}_i)\} =\{\hat{b}_i^{(1)},...,\hat{b}_i^{(L)}\}$, then there exist $(L-1)N(\hat{a}_i)$ distinct elements of $\mathcal{B}^\times$ such that each of them will be decoded by the terminal to some message $m \in \mathcal{M}(\hat{a}_i)$.
\end{lemma}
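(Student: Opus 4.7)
The plan is to, for every pair $(m,l')$ with $m \in \mathcal{M}(\hat{a}_i)$ and $l' \in \{1,\dots,L\}$ satisfying $\hat{b}_i^{(l')}\neq b_i(m)$, explicitly produce one bad tuple $\bm{b}'_{m,l'} \in [2^n]^k$ by choosing a concrete single-edge error pattern, and then verify in turn that (i) the terminal decodes $\bm{b}'_{m,l'}$ to $m$, (ii) the tuples are pairwise distinct as $(m,l')$ varies, and (iii) each lies in $\mathcal{B}^\times$. Summing over the $N(\hat{a}_i)$ choices of $m$ and the $L-1$ admissible choices of $l'$ per $m$ would then yield the claimed $(L-1)N(\hat{a}_i)$ elements.

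The construction hinges on one structural observation: since node $A_i$ has only edge $a_i$ as its input, the signals on $x_i$, $y_i$, $z_i$ are determined entirely by $a_i$, hence are all constant on $\mathcal{M}(\hat{a}_i)$. Fix such an $m$ with $b_i(m)=\hat{b}_i^{(l)}$ and any target $l'\neq l$; by definition of the list $\hat{b}_i^{(1)},\dots,\hat{b}_i^{(L)}$ there is some $m' \in \mathcal{M}(\hat{a}_i)$ with $b_i(m')=\hat{b}_i^{(l')}$. I would let $\bm{r}$ be the single-edge error pattern that overwrites the symbol carried by $z'_i$ from $z'_i(m)$ to $z'_i(m')$; since $\{z'_i\} \in \mathcal{A}$, this is a legal error. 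Then $B_i$ receives $(x_i(m),y_i(m),z'_i(m')) = (x_i(m'),y_i(m'),z'_i(m'))$ and hence outputs $\hat{b}_i^{(l')}$, whereas for every $j\neq i$ the inputs to $B_j$ are untouched, so $b_j(m,\bm{r})=b_j(m)$. Therefore $\bm{b}(m,\bm{r})=\bm{b}'_{m,l'}$, where $\bm{b}'_{m,l'}$ denotes $\bm{b}(m)$ with its $i$-th coordinate overwritten by $\hat{b}_i^{(l')}$. Because $m \in \mathcal{M}^\circ \subseteq \mathcal{M}^g$ is decoded correctly under every error pattern in $\mathcal{R}_{\mathcal{A}}$, the terminal maps $\bm{b}'_{m,l'}$ to $m$, which settles (i).

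Both (ii) distinctness and (iii) membership in $\mathcal{B}^\times$ then reduce to the fact that the decoder is a deterministic function. If $\bm{b}'_{m_1,l'_1}=\bm{b}'_{m_2,l'_2}$, the terminal would output both $m_1$ and $m_2$ on the same input, forcing $m_1=m_2$; comparing the $i$-th coordinates then forces $l'_1=l'_2$. If instead $\bm{b}'_{m,l'}=\bm{b}(m'')$ for some $m'' \in \mathcal{M}^\circ$, then since $m'' \in \mathcal{M}^g$ is decoded correctly under the zero error pattern, the decoder outputs both $m$ and $m''$, so $m=m''$; but then $\bm{b}'_{m,l'}=\bm{b}(m)$ forces $l'=l$, contradicting $l'\neq l$. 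I do not foresee a deep obstacle here: the only subtlety worth flagging is spotting that $x_i$ and $y_i$ are constant on $\mathcal{M}(\hat{a}_i)$, without which a single-edge error on $z'_i$ could not realize all $L$ possible $b_i$-values consistent with that fixed $(x_i,y_i)$ pair, and one would need a more delicate two-scenario argument in the spirit of Lemma~\ref{ltwo}.
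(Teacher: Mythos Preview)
Your proposal is correct and follows essentially the same route as the paper: both arguments fix $m\in\mathcal{M}(\hat a_i)$, apply the single-edge error on $z'_i$ that replaces $z'_i(m)$ by $z'_i(m')$ for some $m'\in\mathcal{M}(\hat a_i)$ with the desired $b_i$-value, use $x_i,y_i$ being determined by $a_i$ to conclude $b_i(m,\bm r)=b_i(m')$, and then invoke $m\in\mathcal{M}^\circ$ to see that the resulting $\bm b$-tuple is decoded to $m$ and hence lies in $\mathcal{B}^\times$. Your write-up is in fact more explicit than the paper's about why the $(L-1)N(\hat a_i)$ tuples are pairwise distinct (via determinism of the decoder), a point the paper leaves implicit in ``Sum over all $m_0$''.
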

\begin{proof}
Consider arbitrary $\hat{a}_i \in \mathcal{A}^\circ_i$, by hypothesis there exist $L$ messages $m_1, ..., m_L \in \mathcal{M}(\hat{a}_i)$ such that $b_i(m_j) = \hat{b}_i^{(j)}$, $j=1,...,L$. For $j =1,...,L$, let $\bm{r}_j$ be the error pattern that changes the signal on $z'_i$ to be $z'_i(m_j)$. Then if an arbitrary message $m_0 \in \mathcal{M}(\hat{a}_i)$ is transmitted by the source and $\bm{r}_j$ happens, the node $B_i$ will receive the same inputs as in the situation that $m_j$ is sent and no error happens. Therefore $b_i(m_0, \bm{r}_j) = \hat{b}_i^{(j)}$, and so $\bm{b}(m_0, \bm{r}_j)$ takes $L$ distinct values for $j=1,..,L$. Since $m_0 \in \mathcal{M}^\circ$ is decodable under any error pattern $\bm{r} \in \mathcal{R}_{\mathcal{A}}$, it follows that $\bm{b}(m_0, \bm{r}_j)$ will be decoded by the terminal to $m_0$ for all $j=1,...,L$. Among these $L$ values, i.e., $\{ \bm{b}(m_0, \bm{r}_j) , j=1,...,L \}$, only one is equal to $\bm{b}(m_0)$, and the remaining $L-1$ of them are elements of $\mathcal{B}^\times$. Sum over all $ m_0 \in \mathcal{M}(\hat{a}_i)$ and the assertion is proved.
\end{proof}

Partition $\mathcal{A}^\circ_{i, l}$ into $\mathcal{A}^\circ_{i, L=1} + \mathcal{A}^\circ_{i, L>1}$, such that every element of $\mathcal{A}^\circ_{i, L=1}$ has a corresponding $L=1$ as defined in Lemma \ref{three}. Then it follows from Lemma \ref{three} that:
\begin{align*}
|\mathcal{A}^\circ_{i,L>1}| \cdot (1-l\epsilon') \cdot 2^{(k-1)n} \le |\mathcal{B}^\times| \le 2^{kn} \epsilon'
\end{align*}
and so
\begin{align}\label{L>1}
 |\mathcal{A}^\circ_{i,L>1}| \le \frac{\epsilon' \cdot 2^n}{1-l\epsilon'}.
\end{align}



We are ready to upper bound $H(b_i|a_i)$.
\begin{align}
H(b_i|a_i) & = - \sum_{\hat{a}_i \in \mathcal{A}^\circ_i} \Pr\{ \hat{a}_i \} \sum_{\hat{b}_i \in \mathcal{B}^\circ_{i}} \Pr \{ \hat{b}_i | \hat{a}_i \} \log \Pr\{ \hat{b}_i | \hat{a}_i \}\nonumber\\
& \le I_1 + I_2 + I_3 \label{i1i2i3},
\end{align}
where
\begin{align*}
I_1 &= - \sum_{\hat{a}_i \in \mathcal{A}^\circ_i \backslash \mathcal{A}_{i,l}^\circ} \Pr\{ \hat{a}_i \} \sum_{\hat{b}_i \in \mathcal{B}^\circ_{i}} \Pr \{ \hat{b}_i | \hat{a}_i \} \log \Pr\{ \hat{b}_i | \hat{a}_i \}\\
I_2 &= - \sum_{\hat{a}_i \in \mathcal{A}^\circ_{i,L>1} } \Pr\{ \hat{a}_i \} \sum_{\hat{b}_i \in \mathcal{B}^\circ_{i}  } \Pr \{ \hat{b}_i | \hat{a}_i \} \log \Pr\{ \hat{b}_i | \hat{a}_i \}\\
I_3 &= - \sum_{\hat{a}_i \in \mathcal{A}_{i,L=1}^\circ} \Pr\{ \hat{a}_i \} \sum_{\hat{b}_i \in \mathcal{B}_{i}^\circ} \Pr \{ \hat{b}_i | \hat{a}_i \} \log \Pr\{ \hat{b}_i | \hat{a}_i \}
\end{align*}
We now bound $I_1$, $I_2$ and $I_3$ respectively.
\begin{align}
I_1 &  \le  \sum_{\hat{a}_i \in \mathcal{A}_i^\circ \backslash \mathcal{A}_{i,l}^\circ} \Pr\{ \hat{a}_i \} \log |2^n| \nonumber\\
& = n  \sum_{\hat{a}_i \in \mathcal{A}^\circ_i \backslash \mathcal{A}_{i,l}^\circ} \Pr\{ \hat{a}_i \}\nonumber\\
& \le n  \frac{|\mathcal{M}^\circ | - \sum_{\hat{a}_i \in \mathcal{A}_{i,l}^\circ} N(\hat{a}_i) }{|\mathcal{M}^\circ|} \nonumber\\
& \le n\left(1 - \frac{|\mathcal{A}^\circ_{i,l}| \cdot (1-l\epsilon')2^{(k-1)n} }{|\mathcal{M}^\circ|}\right) \nonumber \\
& \le n  \left(1 - \frac{(1-1/l - \epsilon')2^n (1-l\epsilon')2^{(k-1)n}}{2^{kn}} \right) \nonumber\\
& < (1/l +l\epsilon' )n . \label{i1}
\end{align}
\begin{align}
I_2 & \le \sum_{\hat{a}_i \in \mathcal{A}_{i,L>1}^\circ} \Pr\{ \hat{a}_i \}  \log | 2^{n} |\nonumber\\
& = n  \sum_{\hat{a}_i \in\mathcal{A}^\circ_{i,L>1}  } \Pr \{\hat{a}_i\} \nonumber \\
& \le n \frac{|\mathcal{A}_{i,L>1}| 2^{(k-1)n} }{2^{kn} (1-\epsilon') } \nonumber \\
& \stackrel{(b)}{\le} \frac{\epsilon'}{(1-\epsilon')(1-l\epsilon')}n,
 \label{i2}
\end{align}
where (b) follows from (\ref{L>1}). Finally, by definition if $\hat{a}_i \in \mathcal{A}^\circ_{i,L=1}$, then there is a unique $\hat{b}_i \in \mathcal{B}_i^\circ$ such that for all messages $m \in \mathcal{M}(\hat{a})$, it follows that $b_i(m) =\hat{b}_i$. Therefore $I_3 = 0$. Substituting (\ref{i2}) and (\ref{i1}) into (\ref{i1i2i3}) we have
\begin{align*}
H(b_i|a_i) < \left(\frac{1}{l} +l\epsilon' \right)  n +  \frac{\epsilon'}{(1-\epsilon')(1-l\epsilon')}n.
\end{align*}
Together with (\ref{hb}), it follows
\begin{align}
I(a_i;b_i) &> \left(1 - \epsilon' - \frac{1}{l} \right) (1-l\epsilon')(n + \log(1-\epsilon')) \nonumber
\\ & \hspace{13mm} - \left(\frac{1}{l} +l\epsilon' \right)  n - \frac{\epsilon'}{(1-\epsilon')(1-l\epsilon')}n. \label{iab}
\end{align}

\vspace{1mm}

\noindent\textbf{Step 3: Connect $z'_i$ and $b_i$.}

Next we show that $I(b_i;z'_i)/n \to 1$, as $\epsilon \to 0$, $n \to \infty$, for $i=1,...,k$, by upper bounding $H(z'_i|b_i)$. We first make some useful observations.
\begin{lemma}\label{lfour}
Let $\mathcal{M}(\hat{z}'_i) = \{m \in \mathcal{M}^\circ: z'_i(m) = \hat{z}'_i\}$, then for any  $m_1, m_2 \in \mathcal{M}(\hat{z}'_i)$ such that $b_i(m_1) \ne b_i(m_2)$, there exists an element of $\mathcal{B}^\times$ that will be decoded by the terminal to either $m_1$ or $m_2$.
\end{lemma}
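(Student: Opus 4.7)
The plan is to mimic the ``poor message'' construction from Lemma~\ref{ltwo} but keep track of which of the two resulting tuples is forced to lie in $\mathcal{B}^{\times}$. First I would fix $m_1,m_2\in\mathcal{M}(\hat{z}'_i)$ with $b_i(m_1)\neq b_i(m_2)$, and let $\bm{r}_1$ be the error pattern that flips the signal on edge $x_i$ from $x_i(m_1)$ to $x_i(m_2)$, while $\bm{r}_2$ is the error pattern that flips $y_i$ from $y_i(m_2)$ to $y_i(m_1)$. Both are single-edge errors permitted by $\mathcal{A}$.

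Because $z'_i(m_1)=z'_i(m_2)=\hat{z}'_i$, in the scenario ``send $m_1$, apply $\bm{r}_1$'' node $B_i$ sees the inputs $(x_i(m_2),y_i(m_1),\hat{z}'_i)$, and in the scenario ``send $m_2$, apply $\bm{r}_2$'' node $B_i$ sees exactly the same inputs. Hence $B_i$ outputs a common value $\beta\triangleq b_i(m_1,\bm{r}_1)=b_i(m_2,\bm{r}_2)$. The next step is to observe that the errors in $\bm{r}_1$ and $\bm{r}_2$ are localized to the $i$-th branch (they occur on edges between $A_i$ and $B_i$, which do not feed into $\mathcal{N}$), so the signals $b_j$ for $j\neq i$ are unaffected. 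The tuples received by the terminal are therefore
\begin{align*}
T_1 &= (b_1(m_1),\ldots,b_{i-1}(m_1),\beta,b_{i+1}(m_1),\ldots,b_k(m_1)),\\
T_2 &= (b_1(m_2),\ldots,b_{i-1}(m_2),\beta,b_{i+1}(m_2),\ldots,b_k(m_2)).
\end{align*}

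Since $m_1,m_2\in\mathcal{M}^\circ\subseteq\mathcal{M}^g$, both must be decoded correctly under any error in $\mathcal{R}_{\mathcal{A}}$, so $T_1$ decodes to $m_1$ and $T_2$ decodes to $m_2$. I would then argue that at least one of $T_1,T_2$ lies in $\mathcal{B}^{\times}$ as follows. If $T_1\in\mathcal{B}^\circ$ then $T_1=\bm{b}(m')$ for some $m'\in\mathcal{M}^\circ$; the no-error transmission of $m'$ is decoded to $m'$, and by determinism of the decoder it must equal $m_1$, forcing $\beta=b_i(m_1)$. Symmetrically, $T_2\in\mathcal{B}^\circ$ forces $\beta=b_i(m_2)$. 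Since $b_i(m_1)\neq b_i(m_2)$, these two conclusions cannot both hold, so at least one of $T_1,T_2$ is an element of $\mathcal{B}^{\times}$, and it is decoded to $m_1$ or to $m_2$ as required.

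The main obstacle is the localization argument: one must check carefully that an error on $x_i$ (respectively $y_i$) cannot propagate into $\mathcal{N}$ and perturb $b_j$ for $j\neq i$. This follows because $\mathcal{N}$'s inputs are the $z_j$ edges, which lie strictly upstream of the $x_i$/$y_i$ edges in the constructed topology of Figure~\ref{zeroerr}; once this is noted, everything else is a short deterministic bookkeeping argument and the uniqueness step using injectivity of $\bm{b}$ on $\mathcal{M}^\circ$ established in Lemma~\ref{ltwo}.
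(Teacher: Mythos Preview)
Your proposal is correct and follows essentially the same approach as the paper: the same single-edge error patterns $\bm{r}_1$ on $x_i$ and $\bm{r}_2$ on $y_i$, the same collision $b_i(m_1,\bm{r}_1)=b_i(m_2,\bm{r}_2)$ at $B_i$, and the same dichotomy forcing one of the two resulting $\bm{b}$-tuples into $\mathcal{B}^\times$. Your final step, using injectivity of $\bm{b}$ on $\mathcal{M}^\circ$ to argue that $T_1\in\mathcal{B}^\circ$ would force $\beta=b_i(m_1)$, is a slightly more explicit version of what the paper states tersely as ``it is an element of $\mathcal{B}^\times$ since it does not equal $\bm{b}(m_1)$''.
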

\begin{proof}
Consider any $m_1, m_2 \in \mathcal{M}(\hat{z}'_i)$ such that $b_i(m_1) \ne b_i(m_2)$. Let $\bm{r}_1$ be the error pattern that changes the signal on $x_i$ to be $x_i(m_2)$, and let $\bm{r}_2$ be the error pattern that changes the signal on $y_i$ to be $y_i(m_1)$. Then if $m_1$ is transmitted by the source and $\bm{r}_1$ happens, the node $B_i$ will receive the same inputs as in the situation that $m_2$ is transmitted and $\bm{r}_2$ happens. Therefore $b_i(m_1, \bm{r}_1) = b_i(m_2, \bm{r}_2)$, and so either $b_i(m_1, \bm{r}_1)  \ne b_i(m_1)$ or $b_i(m_2, \bm{r}_2) \ne b_i(m_2)$ because by hypothesis $b_i(m_1) \ne b_i(m_2)$. Consider the first case that $b_i(m_1, \bm{r}_1)  \ne b_i(m_1)$, then the tuple of signals $(b_1(m_1, \bm{r}_1), ..., b_k(m_1, \bm{r}_1)) = (b_1(m_1), ..., b_i(m_1, \bm{r}_1), ..., b_k(m_1))$ will be decoded by the terminal to message $m_1$ because by hypothesis $m_1 \in \mathcal{M}^\circ$ which is decodable under any error pattern $\bm{r} \in \mathcal{R}_\mathcal{A}$. Therefore it is an element of $\mathcal{B}^\times$ since it does not equals $\bm{b}(m_1)$.
Similarly in the latter case, $b_1(m_2, \bm{r}_2) \ne b_1(m_2)$, then $(b_2(m_2, \bm{r}_2), ..., b_k(m_2, \bm{r}_2)) = (b_2(m_2), ..., b_i(m_2, \bm{r}_2), ..., b_k(m_2))$ is an element of $\mathcal{B}^\times$ and will be decoded by the terminal to $m_2$.
Therefore in either case we find an element of $\mathcal{B}^\times$ that will be decoded to either $m_1$ or $m_2$.
\end{proof}

\begin{lemma}\label{lfive}
Let $\mathcal{M}(\hat{z}'_i, \hat{b}_i) = \{ m \in \mathcal{M}^\circ: z'_i(m) = \hat{z}'_i, b_i(m) = \hat{b}_i  \}$, $\hat{b}_{i, \hat{z}'_i} = \arg \max_{\hat{b}_i \in \mathcal{B}^\circ_i } | \mathcal{M}(\hat{z}'_i, \hat{b}_i) |$ and $N(\hat{z}'_i) = |\mathcal{M}(\hat{z}_i')|$, then there are at least $\frac{1}{2} (N(\hat{z}'_i) - |\mathcal{M}(\hat{z}'_i,\hat{b}_{i,\hat{z}'_i})|)$ distinct elements of $\mathcal{B}^\times$ that will be decoded by the terminal to some messages in $\mathcal{M}(\hat{z}'_i)$.
\end{lemma}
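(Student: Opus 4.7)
The plan is to construct a vertex-disjoint matching $M$ on $\mathcal{M}(\hat{z}'_i)$ whose edges pair messages with differing $b_i$-values and whose size is at least $\frac{1}{2}(N(\hat{z}'_i) - |\mathcal{M}(\hat{z}'_i, \hat{b}_{i, \hat{z}'_i})|)$, and then to invoke Lemma \ref{lfour} once on each edge of $M$ to produce the required distinct elements of $\mathcal{B}^\times$.

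First I would partition $\mathcal{M}(\hat{z}'_i)$ into groups $G_1, \ldots, G_L$ according to the value of $b_i(m)$, where by the choice of $\hat{b}_{i, \hat{z}'_i}$ as the argmax, the group $G_1$ is the largest, with $|G_1| = |\mathcal{M}(\hat{z}'_i, \hat{b}_{i, \hat{z}'_i})|$. The edges of the induced complete multipartite graph on $G_1, \ldots, G_L$ are exactly the pairs of messages satisfying the hypothesis of Lemma \ref{lfour}. Writing $K = N(\hat{z}'_i) - |G_1|$, I would exhibit a matching $M$ of size at least $K/2$: when $|G_1| \ge K$ the $K$ non-majority messages inject one-to-one into $G_1$, giving $|M| = K$; when $|G_1| < K$ I would first saturate $G_1$ by pairing each of its elements with a distinct non-majority message (drawn preferentially from the largest non-majority groups to keep the residual balanced), and then pair the remaining $K - |G_1|$ non-majority messages among themselves across different groups, yielding $|M| \ge K/2$ overall by the standard formula $\min(\lfloor N(\hat{z}'_i)/2 \rfloor, K)$ for maximum matchings in complete multipartite graphs.

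Next, for each matched pair $(m, m') \in M$, Lemma \ref{lfour} supplies an element $E_{m, m'} \in \mathcal{B}^\times$ that the terminal decodes to either $m$ or $m'$. Since $M$ is vertex-disjoint, for two distinct pairs $(m, m')$ and $(m'', m''')$ in $M$ the sets $\{m, m'\}$ and $\{m'', m'''\}$ are disjoint, and hence the messages to which $E_{m, m'}$ and $E_{m'', m'''}$ decode are distinct. Because decoding is a deterministic function of the received tuple, the elements themselves must then be pairwise distinct tuples, giving at least $|M| \ge \frac{1}{2}(N(\hat{z}'_i) - |\mathcal{M}(\hat{z}'_i, \hat{b}_{i, \hat{z}'_i})|)$ distinct elements of $\mathcal{B}^\times$ all decoded to messages in $\mathcal{M}(\hat{z}'_i)$, as required.

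The main obstacle I anticipate is the second step when $|G_1| < K$: after using up $G_1$ by pairing with $|G_1|$ non-majority messages, one must still match the remaining $K - |G_1|$ non-majority messages amongst themselves across their distinct $b_i$-values. This requires a careful choice of which non-majority messages to draw when saturating $G_1$, so that no residual non-majority group becomes more than half of what is left; this is essentially a defect-type statement about maximum matchings in complete multipartite graphs, and the bookkeeping is where most of the technical work lies.
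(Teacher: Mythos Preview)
Your approach is correct and follows the same core idea as the paper: build a vertex-disjoint matching of size at least $\tfrac{1}{2}(N(\hat{z}'_i)-|\mathcal{M}(\hat{z}'_i,\hat{b}_{i,\hat{z}'_i})|)$ on pairs with differing $b_i$-values, then apply Lemma~\ref{lfour} to each pair and use the disjointness of the pairs together with deterministic decoding to guarantee distinctness of the resulting elements of $\mathcal{B}^\times$.

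Where you diverge is in how the matching is built. You appeal to the structure of a complete multipartite graph and split into the cases $|G_1|\ge K$ and $|G_1|<K$, with the latter needing the defect-type bookkeeping you flag as an obstacle. The paper avoids all of this with a one-line greedy argument: set $\mathcal{W}:=\mathcal{M}(\hat{z}'_i)$, repeatedly pick any $m_1,m_2\in\mathcal{W}$ with $b_i(m_1)\ne b_i(m_2)$, delete both, and stop when no such pair remains. At termination every surviving element has the same $b_i$-value, so the residual $\mathcal{W}$ lies inside a single group $G_j$ and hence $|\mathcal{W}|\le |G_j|\le |G_1|$; thus at least $N(\hat{z}'_i)-|G_1|$ elements were deleted, i.e.\ at least $\tfrac{1}{2}(N(\hat{z}'_i)-|G_1|)$ pairs were formed. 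This gives exactly the matching size you need with no case analysis and no careful ordering of which non-majority messages to draw first. Your explicit multipartite-matching argument works and in fact could yield a slightly larger matching in some cases, but since only the $\tfrac{1}{2}K$ bound is required, the greedy route is the cleaner one.
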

\begin{proof}
Let $\mathcal{W} := \mathcal{M}(\hat{z}'_i)$, and
we describe an iterative procedure as follow. Pick arbitrary $m_1, m_2 \in \mathcal{W}$ such that $b_i(m_1) \ne b_i(m_2)$, and then delete them from $\mathcal{W}$ and repeat until there does not exist such $m_1, m_2$. By Lemma \ref{lfour}, each pair of elements deleted from $\mathcal{W}$ will generate a distinct element of $\mathcal{B}^\times$, which will be decoded by the terminal to either $m_1$ or $m_2$.  After the iterative procedure terminates, it follows that $|\mathcal{W}| \le  |\mathcal{M}(\hat{z}'_i,\hat{b}_{i,\hat{z}'_i})|$, because otherwise there must exist $m_1, m_2 \in \mathcal{W}$ such that $b_i(m_1) \ne b_i(m_2)$. Therefore at least $N(\hat{z}'_i) - |\mathcal{M}(\hat{z}'_i,\hat{b}_{i,\hat{z}'_i})|$  elements are deleted and the lemma is proved.
\end{proof}

We are now ready to upper bound $H(z'_i|b_i)$. Recall that $\mathcal{Z}'^\circ_i = \{ z'_i(m) : m \in \mathcal{M}^\circ   \}$. We have,

\begin{align}
H(b_i|z'_i) & = - \sum_{\hat{z}'_i \in \mathcal{Z}'^\circ_i} \Pr\{ \hat{z}'_i \} \sum_{\hat{b}_i \in \mathcal{B}^\circ_{i}} \Pr \{ \hat{b}_i | \hat{z}'_i \} \log \Pr\{ \hat{b}_i | \hat{z}'_i \}\nonumber\\
& = I_4  + I_5, \label{i4i5}
\end{align}
where
\begin{align}
I_4 & = - \sum_{\hat{z}'_i \in \mathcal{Z}'^\circ_i} \Pr\{ \hat{z}'_i \}  \Pr \{ \hat{b}_{i,\hat{z}'_i} | \hat{z}'_i \} \log  \Pr \{ \hat{b}_{i,\hat{z}'_i} | \hat{z}'_i \} \nonumber \\
& < \sum_{\hat{z}'_i \in \mathcal{Z}'^\circ_i} \Pr\{ \hat{z}'_i \} \le 1
\label{i4}
\end{align}
and
\begin{align}
I_5  & =  - \sum_{\hat{z}'_i \in \mathcal{Z}'^\circ_i} \Pr\{ \hat{z}'_i \} \sum_{\hat{b}_i \ne \hat{b}_{i,\hat{z}'_i}} \Pr \{ \hat{b}_i | \hat{z}'_i \} \log \Pr \{ \hat{b}_i | \hat{z}'_i \} \nonumber \\
& \le \sum_{\hat{z}'_i \in \mathcal{Z}'^\circ_i} \Pr\{ \hat{z}'_i \} \sum_{\hat{b}_i \ne \hat{b}_{i,\hat{z}'_i}} \Pr \{ \hat{b}_i | \hat{z}'_i \} \log | 2^{kn} |\nonumber\\
& = k n \sum_{\hat{z}'_i \in \mathcal{Z}'^\circ_i} \sum_{\hat{b}_i \ne \hat{b}_{i,\hat{z}'_i}} \Pr \{ \hat{z}'_i , \hat{b}_i \} \nonumber \\
& = kn \sum_{\hat{z}'_i \in \mathcal{Z}'^\circ_i} \frac{N(\hat{z}'_i) - |\mathcal{M}(\hat{z}'_i,\hat{b}_{i,\hat{z}'_i})|}{|\mathcal{M}^\circ|} \nonumber  \\
& \le \frac{kn}{2^{kn} (1-\epsilon')} \sum_{\hat{z}'_i \in \mathcal{Z}'^\circ_i} (N(\hat{z}'_i) - |\mathcal{M}(\hat{z}'_i,\hat{b}_{i,\hat{z}'_i})|) \nonumber \\
& \stackrel{(c)}{\le} \frac{kn \cdot 2 |\mathcal{B}^\times|}{2^{kn} (1-\epsilon')} \nonumber \\
& =  \frac{kn \cdot 2 \cdot \epsilon' \cdot 2^{kn}}{2^{kn} (1-\epsilon')} = \frac{2k\epsilon'}{1-\epsilon'}n, \label{i5}
\end{align}
where (c) follows from Lemma \ref{lfive}.
Substituting (\ref{i4}) and (\ref{i5}) to (\ref{i4i5}) we have
\begin{align}\label{hbz'}
H(b_i|z'_i) \le 1 +  \frac{2k\epsilon'}{1-\epsilon'}n.
\end{align}

\noindent \textbf{Step 4: Connect $a_i$ and $z_i$.}

Finally we discuss the connection between $a_i$ and $z_i$. We may assume that $z_i = a_i$ without loss of generality in the following sense. For every network code that achieves rate $k$ with error probability  $\epsilon$ in $\mathcal{I}_c$, we can modify it slightly to obtain a new code such that $z_i = a_i$, and such that the code on all other edges and at the terminal are the same as the original code. This modification is feasible because the encoding function $z_i$ of the original code is a function of $a_i$, and so if we let $z_i = a_i$, then the node $s_i$, $i=1,..,k$, always has enough information to reproduce the original network code. Since from the perspective of the terminal, the modified code
is the same as the original code, it also achieves rate $k$ with error probability $\epsilon$ in $\mathcal{I}_c$.
Hence,
\begin{align}\label{iaz}
I(a_i;z_i) = H(a_i). 
\end{align}

\noindent \textbf{Step 5: Connect $z_i$ and $z'_i$.}

By (\ref{iab}), (\ref{hbz'}), (\ref{iaz}) and Lemma \ref{info}, we have
\begin{align}
I(z_i;z'_i) & \ge I(a_i;z_i) + I(a_i;b_i) + I(b_i;z'_i) - H(a_i) - H(b_i)\nonumber\\
& = I(a_i;b_i) + I(b_i;z'_i) - H(b_i) \nonumber\\
& = I(a_i;b_i) - H(b_i|z'_i)\nonumber\\
& > \left(1 - \epsilon' - \frac{1}{l} \right) (1-l\epsilon')(n + \log(1-\epsilon'))   \nonumber\\
&  \hspace{15mm} - \left(\frac{1}{l} +l\epsilon' \right)  n - \frac{\epsilon'}{(1-\epsilon')(1-l\epsilon')}n  \nonumber\\
 & \hspace{40mm} - 1 -  \frac{2k\epsilon'}{1-\epsilon'}n.  \label{izz'}
\end{align}

Recall that (\ref{izz'}) is obtained under the assumption that the source message is uniformly distributed over $\mathcal{M}^\circ$, and hence $\bm{a}$ is uniformly distributed over $\mathcal{A}^\circ$. Therefore the random variables $\{a_1(M^\circ),...,a_k(M^\circ)\}$ are not independent. Now we consider the case that $\bm{a}$ is uniformly distributed over $[2^n]^k$ and the network code is the same as before. Specifically, the decoding function and the encoding functions at all edges except $a_1, ..., a_k$ are the same as the network code that achieves rate $k$ with error probability $\epsilon$ in $\mathcal{I}_c$. We are interested in the mutual information between $z_i$ and $z'_i$ under this setting where  the random variables $\{a_1,...,a_k\}$ are independent. Let $\bm{a} \gets [2^n]^k$ denote that the distribution of $\bm{a}$ is uniform over $[2^n]^k$,
\begin{align}
I_{\bm{a} \gets [2^n]^k} (z_i;z'_i) = \Pr_{\bm{a} \gets [2^n]^k}\{\bm{a} \in \mathcal{A}^\circ\} I_{\bm{a} \gets [2^n]^k} (z_i;z'_i| \bm{a} \in \mathcal{A}^\circ) & \nonumber  \\
& \hspace{-73mm} + \Pr_{\bm{a} \gets [2^n]^k} \{ \bm{a} \in \mathcal{A}^\times \} I_{\bm{a} \gets[2^n]^k } (z_i;z'_i| \bm{a} \in \mathcal{A}^\times ).
\end{align}
Note that
\begin{align}\label{ft}
I_{\bm{a} \gets [2^n]^k} (z_i;z'_i| \bm{a} \in \mathcal{A}^\circ) =  I_{\bm{a} \gets \mathcal{A}^\circ}(z_i;z'_i)= I(z_i;z'_i),
\end{align}
which is exactly the result we computed in (\ref{izz'}). And
\begin{align}\label{st}
\Pr_{\bm{a} \gets [2^n]^k} \{\bm{a} \in \mathcal{A}^\circ\} = |\mathcal{A}^\circ|/2^{kn} \ge (1-\epsilon').
\end{align}
Therefore by (\ref{ft}) and (\ref{st}), $I_{\bm{a} \gets [2^n]^k} (z_i;z'_i) \ge (1-\epsilon') I(z_i;z'_i)$. Then by (\ref{izz'}), for any $\epsilon_1 > 0$,  by first choosing a sufficiently large $l$ and then a sufficiently small $\epsilon$ and a sufficiently large $n$, it follows that $I_{\bm{a} \gets [2^n]^k} (z_i;z'_i)/n > 1-\epsilon_1$, $i=1,...,k$. Hence unit rate is asymptotically feasible in $\mathcal{I}$ by the channel coding theorem \cite{Cover:2006gz}. This completes the proof of the first part of the theorem.

``$\Leftarrow$''. Conversely, we show that if unit rate is feasible in $\mathcal{I}$, then rate $k$ is feasible in $\mathcal{I}_c$. Again we use the constructive scheme in Figure \ref{ach}. In $\mathcal{I}_c$, the source lets $M = (M_1,...,M_k)$, where the $M_i$'s are i.i.d. uniformly distributed over $[2^{n}]$. Let the network code be $a_i(M)=x_i(M)=y_i(M)=z_i(M)=z'_i(M)=M_i$, $i=1,...,k$, and let node $B_i$, $i=1,...,k$ performs majority decoding. The terminal $t$ will not decode an error as long as the multiple-unicast instance $\mathcal{I}$ does not commit an error. This happens with probability at least $1-\epsilon$, which implies the feasibility of rate $k$ in $\mathcal{I}_c$.

\section{conclusion}
\begin{figure*}[htb!]
\center
  \includegraphics[width=0.75\textwidth]{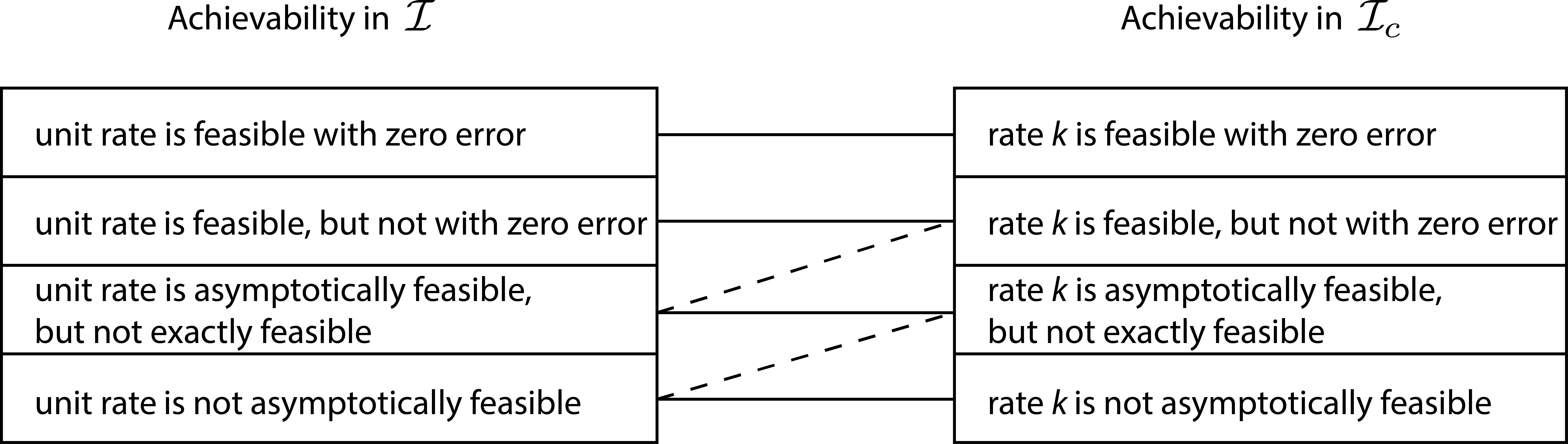}
  \caption{Equivalence between multiple-unicast and network error correction under the construction of Fig. \ref{zeroerr}.  In our reduction, given an instance $\mathcal{I}$ of the multiple-unicast network coding problem, we construct an instance $\mathcal{I}_c$ of the single-unicast network error correction problem.  This figure expresses our current understanding of the relation between achievability in $\mathcal{I}$ and $\mathcal{I}_c$. A solid line between two states means that there exist instances of $\mathcal{I}$ and $\mathcal{I}_c$ with the corresponding achievability. If there is no line between two states, there do not exist instances of $\mathcal{I}$ and $\mathcal{I}_c$ with the corresponding achievability. A dashed line between two states means that whether there is a solid line between the two states or not is still  an open problem. }\label{result}
\end{figure*}
We summarize the results of this paper in Fig. \ref{result} which expresses the possible connections between the feasibility of a general multiple-unicast network coding instance $\mathcal{I}$ and its reduced single-unicast network error correction instance $\mathcal{I}_c$. Our results present an equivalence between multiple-unicast and network error correction under zero-error communication.
Namely, determining the feasibility of zero-error unit rate in $\mathcal{I}$ is equivalent to determining  the feasibility of zero-error rate $k$ in $\mathcal{I}_c$.
This is expressed in Fig. \ref{result} by the fact that the two states in the first row are connected by a single solid edge and there are no other edges connected to these two states.

For the vanishing error model, however, the implication of our results are more involved.
In our reduction there is a slight slackness, which gives rise to the two dashed lines in Fig. \ref{result}. 
For example, consider the case that unit rate is feasible (but not with zero error) in $\mathcal{I}$.
Then it follows that  rate $k$ is feasible (but not with zero error) in $\mathcal{I}_c$. This fact is expressed by the sole solid line leaving the state that unit rate is feasible (but not with zero error) in $\mathcal{I}$.
However, if rate $k$ is feasible (but not with zero error) in
$\mathcal{I}_c$, then there are potentially two possibilities: (a) that unit
rate is feasible (but not with zero error) in $\mathcal{I}$; and (b) that
unit rate is asymptotically feasible  (but not exactly feasible) in
$\mathcal{I}$. Option (a) is represented by a solid line, as indeed we have
observed instance $\mathcal{I}$ with a corresponding $\mathcal{I}_c$ that
fits this setting. Option (b) is represented by a dashed line, as on one the
one hand it has not been ruled out by our analysis, but on the other hand we are not aware of how to construct instances $\mathcal{I}$ with a corresponding $\mathcal{I}_c$ that fit this setting. 

All in all, under the vanishing error model, the two dashed lines in Fig. \ref{result} do not allow us to directly determine the feasibility of  $\mathcal{I}$ based on the feasibility of $\mathcal{I}_c$. Nevertheless, we may consider the following problem on $\mathcal{I}$ which can be solved by the study of $\mathcal{I}_c$ using our results: Given an instance $\mathcal{I}$, partially determine between the three possible settings in the following manner:
if unit rate is feasible (but not with zero error) in $\mathcal{I}$, answer {\em yes}; if unit rate is not asymptotically feasible in $\mathcal{I}$, answer {\em no}; if unit rate is asymptotically feasible but not exactly feasible in $\mathcal{I}$, then any answer is considered correct. By our results,  answering {\em yes} if and only if rate $k$ is feasible (but not with zero error) in $\mathcal{I}_c$ solves the problem above.
Whether the partial distinction problem above on $\mathcal{I}$ is a {\em difficult} one (compared to the standard feasibility of multiple-unicast network coding) is yet to be established.

Finally we note that our reduction and analysis in the paper consider the
case that the connections between source destination pairs in $\mathcal{I}$ have unit rate.
Our reduction can be adapted to connections with different rates.

\section*{Acknowledgement}
This work has been supported in part by NSF grant CCF-1440014, CCF-1440001, 	CCF-1439465, and CCF-1321129.

\bibliographystyle{IEEEtran}
\bibliography{ref}
\end{document}